\newcolumntype{C}[1]{>{\centering\arraybackslash}p{#1}}
\def\abs#1{\ensuremath{\lvert #1\rvert}}
\DeclareRobustCommand\sfrac[1]{\@ifnextchar/{\@sfrac{#1}}%
                                            {\@sfrac{#1}/}}
\def\@sfrac#1/#2{\leavevmode\scalebox{.9}{\kern.1em\raise.5ex
         \hbox{$\m@th\mbox{\fontsize\sf@size\z@
                           \selectfont#1}$}\kern-.1em
         /\kern-.15em\lower.25ex
          \hbox{$\m@th\mbox{\fontsize\sf@size\z@
                            \selectfont#2}$}}}
\DeclareRobustCommand\numfrac[1]{\@ifnextchar/{\@numfrac{#1}}%
                                            {\@numfrac{#1}}}
\def\@numfrac#1{\leavevmode \hbox{$\m@th\mbox{\fontsize\sf@size\z@
                           \selectfont#1}$}}
\newcommand{\nat}{\mathbb N}
\newcommand{\tuple}[1]{\langle #1 \rangle}
\newcommand{\dist}{{\cal D}}
\renewcommand{\u}{{\sf u}}    
\newcommand{\M}{{\cal M}}      
\newcommand{\N}{{\cal N}}
\newcommand{\Epsilon}{{\cal E}}     
\newcommand{\Supp}{{\sf Supp}}
\newcommand{\straa}{\sigma} \newcommand{\Straa}{\Sigma}
\newcommand{\val}{\mathrm{val}}
\newcommand{\win}[2]{\langle \! \langle 1 \rangle \! \rangle_{\mathit{#2}}^{\mathit{#1}}}
\newcommand{\winsure}[1]{\langle \! \langle 1 \rangle \! \rangle_{\mathit{sure}}^{\mathit{#1}}}
\newcommand{\winas}[1]{\langle \! \langle 1 \rangle \! \rangle_{\mathit{almost}}^{\mathit{#1}}}
\newcommand{\winlim}[1]{\langle \! \langle 1 \rangle \! \rangle_{\mathit{limit}}^{\mathit{#1}}}
\newcommand{\winbound}[1]{\langle \! \langle 1 \rangle \! \rangle_{\mathit{bounded}}^{\mathit{#1}}}
\newcommand{\winpos}[1]{\langle \! \langle 1 \rangle \! \rangle_{\mathit{positive}}^{\mathit{#1}}}
\newcommand{\Act}{{\sf A}}
\newcommand{\APre}{{\sf APre}}
\newcommand{\Pre}{{\sf Pre}}
\let\epsilon\varepsilon
\let\emptyset\varnothing
\newenvironment{longversion}{}{}
\newenvironment{shortversion}{}{}
\begin{document}
\pagestyle{plain}

\excludecomment{shortversion}


\title{{\bf Bounds for Synchronizing\\ Markov Decision Processes}}

\author{Laurent~Doyen\inst{1} \and Marie~van~den~Bogaard\inst{2}}

\institute{CNRS \& LMF, ENS Paris-Saclay, France \and LIGM, Université Gustave Eiffel}

\maketitle

\begin{abstract}
We consider Markov decision processes with synchronizing objectives,
which require that a probability mass of $1-\epsilon$ accumulates in a designated 
set of target states, either once, always, infinitely often, or always
from some point on,  where $\epsilon = 0$ for sure synchronizing, 
and $\epsilon \to 0$ for almost-sure and limit-sure synchronizing.

We introduce two new qualitative modes of synchronizing, 
where the probability mass should be either \emph{positive}, or \emph{bounded}
away from~$0$. They can be viewed as dual synchronizing objectives.
We present algorithms and tight complexity results for the problem of deciding
if a Markov decision process is positive, or bounded synchronizing, and we provide explicit
bounds on $\epsilon$ in all synchronizing modes.
In particular, we show that deciding positive and bounded synchronizing
always from some point on, is coNP-complete. 
\end{abstract}


\section{Introduction}\label{sec:intro}
Markov decision processes (MDP) are finite-state probabilistic systems with 
controllable (non-deterministic) choices. 
They play a central role
in several application domains for practical purpose~\cite{HBMDP,Puterman}, 
and in theoretical computer science as a basic model 
for the analysis of stochastic transition systems~\cite{BK08,CY95}.

In the traditional state-based semantics, 
we consider the \emph{sequences of states} that form a path in the underlying graph of the MDP.
When a control policy (or strategy) for the non-deterministic choices is fixed, 
we obtain a purely stochastic process that induces a probability measure 
over sets of paths~\cite{BK08,CH12}. 

In the more recent distribution-based semantics, 
the outcome of a stochastic process is a sequence of distributions
over states~\cite{BRS02,KVAK10}. This alternative semantics has received
some attention recently for theoretical analysis of probabilistic bisimulation~\cite{HKK14}
and is adequate to describe large populations of agents~\cite{DMS19,CFO20} with applications 
in system biology~\cite{KVAK10,AGV18}. 
The behaviour of an agent is modeled as an MDP with some state space $Q$, 
and a large population of identical agents is described by a (continuous) 
distribution $d: Q \to [0,1]$ that gives the fraction $d(q)$ of agents in the 
population that are in each state $q \in Q$.
The control problem is to construct a strategy for the agents
that guarantees a specified global outcome of the agents,
defined in terms of \emph{sequences of distributions}. Specifications of interest 
include safety objectives~\cite{AGV18} and synchronization objectives~\cite{DMS19}.
A distribution is $p$-synchronized in a set $T$ of states if 
it assigns to the states in $T$ a mass of probability at least $p$.
Synchronization objectives require
that $p$-synchronized distributions occur in the outcome sequence,
either, at some position (eventually), at all positions (always),
infinitely often (weakly), or always from some point on (strongly),
where synchronization is sure winning for $p=1$, and almost-sure 
or limit-sure winning for $p$ arbitrarily close to $1$~\cite{DMS19}.

Consider eventually synchronizing as an illustration.
Formally, denoting by $d_i^{\straa}(T)$ the probability
mass in a set $T$ under strategy $\straa$ at position $i$
in a given MDP, the three winning modes for eventually synchronizing 
correspond to the following three possible orders of the quantifiers:
\begin{itemize}
\item
$\forall \epsilon>0 \cdot \exists \straa \cdot \exists i: d_i^{\straa}(T) \geq 1 - \epsilon$,
for limit-sure winning,
\item
$\exists \straa \cdot \forall \epsilon>0 \cdot \exists i: d_i^{\straa}(T) \geq 1 - \epsilon$,
for almost-sure winning,
\item
$\exists \straa \cdot \exists i \cdot \forall \epsilon>0: d_i^{\straa}(T) \geq 1 - \epsilon$,
for sure winning.
\end{itemize}

Note that the formula $\forall \epsilon>0: d_i^{\straa}(T) \geq 1 - \epsilon$
is equivalent to $d_i^{\straa}(T) = 1$ in the case of sure winning.
%
Defining the value of a strategy $\straa$ as 
$\val(\straa) = \sup_i d_i^{\straa}(T)$,
the question for limit-sure winning is analogous to the 
cutpoint isolation problem for value $1$, i.e. whether
the value $1$ can be approached arbitrarily closely~\cite{Rabin63,BS20}.
Previous work~\cite{DMS19} shows that the above three questions are PSPACE-complete, 
and presents a construction of the (existentially quantified) strategy~$\straa$
when one exists.

\begin{figure}[t]
\begin{center}
    \begin{picture}(38,18)(0,0)

\node[Nmarks=ir,iangle=180](n0)(10,4){$q_{0}$}
\node[Nmarks=n](n1)(30,4){$q_1$}

\drawedge(n0,n1){$\frac{1}{2}$}
\drawloop[ELside=l,loopCW=y, loopangle=90, loopdiam=5](n0){$\frac{1}{2}$}
\drawloop[ELside=l,loopCW=y, loopangle=90, loopdiam=5](n1){}

\end{picture}
\end{center}
 \caption{A Markov chain that is positively but not boundedly synchronizing (for 
all modes except eventually). \label{fig:mc}}
\end{figure}
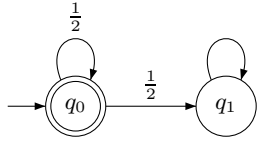

In this paper, we consider dual synchronization objectives obtained
either by taking the negation of the synchronization objectives, or 
by replacing the existential quantifier on strategies by a universal quantifier.
\smallskip 

1.~Taking the negation corresponds to the control player having no 
strategy to satisfy the synchronization objective. In that case, we 
show that a more precise information can be derived, namely bounds 
on the value of $\epsilon$, which is existentially quantified, 
and we construct explicit values for the four synchronizing modes. 
These values give bounds on the isolation distance of the value $1$.
For instance, the negation of limit-sure eventually synchronizing in $T$ is
given by the formula:
$$\exists \epsilon > 0 \cdot \forall \straa \cdot \forall i: d^{\straa}_{i}(T) \leq 1-\epsilon.$$
We show that the statement holds for a value $\epsilon = \epsilon_e(n,\alpha,\alpha_0)$ 
that depends on the number $n$ of states of the MDP,
the smallest positive probability $\alpha$ in the transitions of the MDP,
and the smallest positive probability $\alpha_0$ in the initial distribution $d_0$ 
(see Theorem~\ref{theo:epsilon-eventually}).
The most interesting case is when limit-sure weakly synchronizing does not hold,
that is: 
$$\exists \epsilon > 0  \cdot \forall \straa \cdot 
\exists N \cdot \forall i \geq N: d^{\straa}_{i}(T) \leq 1-\epsilon.$$

Given the value $\epsilon = \epsilon_w$ that satisfies this condition (see Theorem~\ref{theo:epsilon-weakly}),
the value of $N$ can be arbitrarily large (depending on the 
strategy $\straa$). Nevertheless, we can effectively construct a constant $N_w$
such that, for all strategies $\straa$, in the sequence $(d^{\straa}_{i})_{i \in \nat}$
there are at most $N_w$ distributions that are $(1-\epsilon_w)$-synchronized in~$T$.
\smallskip

2.~Replacing the existential strategy quantifier by a universal quantifier 
corresponds to an adversarial MDP where all strategies need to satisfy the requirement,
or after taking the negation, to the existence of a strategy that violates 
a dual of the synchronizing requirement.
Note that there is no more alternation of quantifiers on $\epsilon$ and on $\straa$ 
($\forall \epsilon \cdot \forall \straa$ is the same as 
$\forall \straa \cdot \forall \epsilon$), which gives rise to only two
new winning modes in existential form:
\begin{samepage}
\begin{itemize}
\item
$\exists \straa \cdot \exists \epsilon>0 \cdot \forall i: d_i^{\straa}(T) \geq \epsilon$,
that we call \emph{bounded} winning,
\item
$\exists \straa \cdot \forall i \cdot \exists \epsilon > 0: d_i^{\straa}(T) \geq \epsilon$,
that we call \emph{positive} winning (since this is equivalent to 
$\exists \straa \cdot \forall i: d_i^{\straa}(T) > 0$).
\end{itemize}
\end{samepage}

\begin{table}[!t]
\begin{center}
\caption{
Positive and Bounded winning modes for always, strongly, weakly, and eventually synchronizing objectives. 
\label{tab:def-modes}}{
\smallskip
\begin{tabular}{l@{\quad} *{2}{c@{\;}c@{\;}c@{\qquad}}}
 \large{\strut}         & \multicolumn{3}{c}{Always} & \multicolumn{3}{c}{Strongly} \\
\hline
 Positively \large{\strut}    & $\exists \straa$ & $\forall i$                      & $d^{\straa}_i(T) > 0$ 
        	              & $\exists \straa$ & $\exists N \: \forall i \geq N$  & $d^{\straa}_i(T) > 0$ 
 \\
 Boundedly \  \large{\strut} 	& $\exists \straa$ & $\inf_{i}$                 & $d^{\straa}_i(T) > 0$
				& $\exists \straa$ & $\liminf_{i\to\infty}$     & $d^{\straa}_i(T) > 0$ 
 \\
\hline
 \large{\strut}   &\multicolumn{3}{c}{Weakly} & \multicolumn{3}{c}{Eventually} \\
\hline
 Positively \large{\strut}	  
			& $\exists \straa$ & $\forall N \: \exists i \geq N$   & $d^{\straa}_i(T) > 0$ 
			& $\exists \straa$ & $\exists i$                       & $d^{\straa}_i(T) > 0$ 
 \\
 Boundedly \  \large{\strut}	  
				& $\exists \straa$ & $\limsup_{i\to\infty}$  & $d^{\straa}_i(T) > 0$ 
				& $\exists \straa$ & $\sup_{i}$              & $d^{\straa}_i(T) > 0$ 
 \\
\hline
\end{tabular}  
}
\end{center}
\end{table}


Table~\ref{tab:def-modes} presents the analogous definitions of bounded and positive
winning for the four synchronizing modes. It is easy to see that 
for eventually synchronizing, the positive and bounded mode coincide,
while for the other synchronizing modes the positive and bounded modes are distinct, 
already in Markov chains (see \figurename~\ref{fig:mc}). 

We establish the complexity of deciding bounded and positive winning
in the four synchronizing modes, given an MDP and initial distribution (which 
we call the membership problem), and we also construct explicit values for $\epsilon$. 
Adversarial MDPs are a special case of two-player stochastic games~\cite{CH12}
in which only the second player (the adversary of the first player) 
is non-trivial. The results of this paper will be useful for the
analysis of adversarial MDPs obtained from a game
by fixing a strategy of the first player.
The complexity results are summarized in Table~\ref{tab:complexity}.
For positive winning, memoryless winning strategies exist (playing all 
actions uniformly at random is sufficient), and the problem can be solved 
by graph-theoretic techniques on Markov chains.
For bounded winning, the most challenging case is strongly synchronizing,
where we show that a simple form of strategy with memory is winning,
and that the decision problem is coNP-complete. We give
a structural characterization of bounded strongly synchronizing MDPs, 
and show that it can be decided in coNP.
Note that the coNP upper bound is not obtained by guessing a strategy, 
since the coNP lower bound holds in the case of Markov chains where 
strategies play no role. 
\begin{shortversion}
Omitted proofs and additional material can be found in an extended version of this paper~\cite{DSv22}.
\end{shortversion}

\paragraph{Related works.}
The distribution-based semantics of MDPs~\cite{BRS02,KVAK10} has received an increased amount
of attention recently, with works on safety objectives~\cite{AGV18}
and synchronizing objectives~\cite{DMS19} (see also references therein). 
Logic and automata-based frameworks express distribution-based properties,
by allowing different order of the logical quantifiers,
such as $\forall \straa \exists i$ in standard reachability
which becomes $\exists i \forall \straa$ in synchronized reachability~\cite{BRS02,AFA1,CD16}. 
The bounded and positive winning modes introduced in this paper have not been considered before.
They bear some similarity with the qualitative winning modes in concurrent games~\cite{ConcOmRegGames}. 

Applications are found in modeling of large populations of identical agents,
such as molecules, yeast, bacteria, etc.~\cite{KVAK10,AGV18} where the probability 
distributions represent concentrations of each agent in the system. Analogous
models have been considered in a discrete setting where the number of agents
is a parameter~$n$, giving rise to control problems for parameterized systems,
asking if there exists a strategy that brings all $n$ agents synchronously 
to a target state~\cite{BDGG17,CFO20}.


\begin{table}[t]
\begin{center}
\caption{Computational complexity of the membership problem (for eventually 
synchronizing, the positive and bounded modes coincide).\label{tab:complexity}}{
\smallskip
\begin{tabular}{l@{\quad}c@{\quad}c@{\quad}c@{\quad}c}
\large{\strut}            &  Always & Eventually            & Weakly       & Strongly \\
\hline                        
Positively \large{\strut} & coNP-C  & \multirow{2}{*}{NL-C} & NL-C    & coNP-C  \\
Boundedly \large{\strut}  & coNP-C  &                       & NL-C & coNP-C  \\
\hline
\end{tabular}  
}
\end{center}
\end{table}

\section{Definitions}\label{sec:def}

A \emph{probability distribution} over a finite set~$Q$ is a
function $d : Q \to [0, 1]$ such that $\sum_{q \in Q} d(q)= 1$. 
The \emph{support} of~$d$ is the set $\Supp(d) = \{q \in Q \mid d(q) > 0\}$. 
We denote by $\dist(Q)$ the set of all probability distributions over~$Q$. 
Given a set $T\subseteq Q$, let $d(T)=\sum_{s\in T}d(s)$.


A \emph{Markov decision process} (MDP) is a tuple $\M = \tuple{Q,\Act,\delta}$ 
where $Q$ is a finite set of states, $\Act$ is a finite set of labels called actions, 
and $\delta: Q \times \Act \to \dist(Q)$ is a probabilistic transition function.   
A \emph{Markov chain} is an MDP with singleton action set $\abs{\Act} = 1$.
Given a state $q\in Q$ and an action $a \in \Act$, the successor state of $q$
under action $a$ is $q'$ with probability $\delta(q,a)(q')$.

Given $X,Y \subseteq Q$, let 
$$\APre(Y,X) = \{q \in Q \mid \exists a \in \Act: \Supp(\delta(q,a)) \subseteq Y \land \Supp(\delta(q,a)) \cap X \neq \emptyset\},$$
be the set of states from which there is an action to ensure that all successor 
states are in~$Y$ and that with positive probability the successor state is in $X$, 
and for $X = Q$, let 
$$\Pre(Y) = \APre(Y,Q) = \{q \in Q \mid \exists a \in \Act: \Supp(\delta(q,a)) \subseteq Y\}$$
be the set of states from which there is an action to ensure (with probability~$1$)
that the successor state is in~$Y$.  
For $k>0$, let $\Pre^k(Y) = \Pre(\Pre^{k-1}(Y))$ with $\Pre^0(Y) = Y$.
Note that the sequence $\Pre^k(Y)$ of iterated predecessors is ultimately periodic,
precisely there exist $k_1 < k_2 \leq 2^{\abs{Q}}$ such that $\Pre^{k_1}(Y) = \Pre^{k_2}(Y)$.

\paragraph{Strategies.}
A (randomized) \emph{strategy} in $\M$ is a function $\straa: (Q\Act)^{*}Q \to \dist(\Act)$
that, given a finite sequence $\rho = q_0 a_0 q_1 a_1 \dots q_k$,
chooses the next action $a_k$ with probability $\straa(\rho)(a_k)$.
We write $\straa_1 \subseteq \straa_2$ if $\Supp(\straa_1(\rho)) \subseteq \Supp(\straa_2(\rho))$
for all $\rho \in (Q\Act)^{*}Q$.
A strategy $\straa$ is \emph{pure} if for all $\rho \in (Q\Act)^{*}Q$, 
there exists an action $a \in \Act$ such that $\straa(\rho)(a) = 1$. 
In all problems considered in this paper, it is known that pure strategies 
are sufficient~\cite{DMS19}. However, the bounds we provide in case
there is no winning strategy hold for all strategies, pure or randomized.

Given an initial distribution $d_0 \in \dist(Q)$ and a strategy $\straa$ in $\M$,
the probability of a finite sequence $\rho = q_0 a_0 q_1 a_1 \dots q_k$ is 
defined by 
$\Pr^{\straa}_{d_0}(\rho) = d_0(q_0) \cdot \prod_{j=0}^{k-1} 
\straa(q_0 a_0\dots q_j)(a_j) \cdot \delta(q_j,a_j)(q_{j+1}).$
For an initial distribution $d_0$ such that $d_0(q_0) = 1$,
we sometimes write $\Pr^{\straa}_{q_0}(\cdot)$ and say that 
$q_0$ is the initial state. 
We say that $\rho$ is \emph{compatible} with $\straa$ and $d_0$ if 
$\Pr^{\straa}_{d_0}(\rho) > 0$. By extension, an infinite sequence
$\pi \in (Q\Act)^{\omega}$ is compatible with $\straa$ and $d_0$ if 
all prefixes of $\pi$ that end in a state are compatible.
It is standard to extend (in a unique way) $\Pr^{\straa}_{d_0}$ over Borel sets of infinite 
paths in $(Q\Act)^{\omega}$ (called events), by assigning probability $\Pr^{\straa}_{d_0}(\rho)$
to the basic cylinder set containing all infinite paths with prefix $\rho$~\cite{Vardi-focs85,BK08}. 
Given a set $T \subseteq Q$ of target states, and $k \in \nat$, 
we define the following events (sometimes called objectives): 
\begin{itemize}
\item $\Box T = \{q_0 a_0 q_1 \dots \in (Q\Act)^{\omega} \mid \forall i: q_i \in T\}$ 
the safety event of staying in~$T$; 
\item $\Diamond T = \{q_0 a_0 q_1 \dots \in (Q\Act)^{\omega} \mid \exists i: q_i \in T\}$
the event of reaching~$T$; 
\item $\Diamond^{k}\, T = \{q_0 a_0 q_1 \dots \in (Q\Act)^{\omega} \mid q_k \in T \}$ the event
of reaching~$T$ after exactly $k$~steps;
\end{itemize}

A distribution~$d_0$ is \emph{almost-sure winning} for an event $\Omega$ 
if there exists a strategy $\straa$ such that $\Pr^{\straa}_{d_0}(\Omega) = 1$,
and \emph{limit-sure winning} if $\sup_{\straa} \Pr^{\straa}_{d_0}(\Omega) = 1$, 
that is the event $\Omega$ can be realized with probability arbitrarily close to~$1$.
Finally $d_0$ is \emph{sure winning} for $\Omega$ if there exists 
a strategy $\straa$ such that all paths compatible with $\straa$ and $d_0$ 
belong to $\Omega$.

Safety and reachability events are dual, in the sense that $\Diamond T$
and $\Box(Q \setminus T)$ form a partition of $(Q\Act)^{\omega}$. 
It is known for safety objectives $\Box T$ that the three 
winning regions (sure, almost-sure winning, and limit-sure winning) 
coincide in MDPs, and for reachability objectives $\Diamond T$, 
almost-sure and limit-sure winning coincide~\cite{deAlfaro97}.
It follows that if the negation of almost-sure reachability holds, that is
$\Pr^{\straa}_{d_0}(\Diamond T) < 1$ for all strategies $\straa$, 
then equivalently \mbox{$\inf_{\straa} \Pr^{\straa}_{d_0}(\Box(Q \setminus T)) > 0$} (note the strict inequality),
the probability mass that remains always outside $T$ can be bounded.
An explicit bound can be obtained from the classical characterization 
of the winning region for almost-sure reachability~\cite{CY95}.

\begin{lemma}\label{lem:almost-sure-reach-MDP}
If a distribution $d_0$ is not almost-sure winning for a reachability
objective $\Diamond T$ in an MDP $\M$, then for all strategies $\straa$, 
for all $i \geq 0$, 
we have 
$Pr^{\straa}_{d_0}(\Diamond^{i}\, T) \leq 1 - \alpha_0 \cdot \alpha^{n}$
where $n = \abs{Q}$ is the number of states and
$\alpha$ the smallest positive probability in $\M$,
and $\alpha_0 = \min \{d_0(q) \mid q \in \Supp(d_0)\}$ is the smallest 
positive probability in the initial distribution $d_0$.
\end{lemma}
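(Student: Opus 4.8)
\emph{Proof plan.}\quad The plan is to reduce the statement to a uniform bound on the maximal probability of \emph{ever} reaching $T$ from a single state, and to prove that bound by passing to a memoryless optimal strategy and analysing the resulting Markov chain.

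Since $d_0$ is not almost-sure winning for $\Diamond T$, it is also not limit-sure winning (for reachability these two notions coincide, as recalled above), hence $\sup_{\straa}\Pr^{\straa}_{d_0}(\Diamond T)<1$. Decomposing an initial distribution over its support and recombining strategies that start in the individual states, this forces some $q_0\in\Supp(d_0)$ for which already $v:=\sup_{\straa}\Pr^{\straa}_{q_0}(\Diamond T)<1$; note that $d_0(q_0)\ge\alpha_0$. Equivalently, $q_0$ lies outside the almost-sure winning region for $\Diamond T$, and it remains to bound~$v$.

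It is classical that maximal reachability probabilities in an MDP are attained by a pure memoryless strategy~$\straa^{*}$. Fixing such a strategy makes $\M$ a finite Markov chain $\M^{\straa^{*}}$ with $\Pr^{\straa^{*}}_{q_0}(\Diamond T)=v$, all of whose positive transition probabilities are transition probabilities of~$\M$ and hence at least~$\alpha$. Every run of $\M^{\straa^{*}}$ is almost surely absorbed in a bottom SCC; on the event $\Box(Q\setminus T)$, which has probability $1-v>0$, that bottom SCC~$B$ must be disjoint from~$T$ (a bottom SCC meeting~$T$, once entered, is visited infinitely often and so would force a visit to~$T$), and it is entered along a run that avoids~$T$. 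Shortcutting that run to a simple path yields a path from~$q_0$ into~$B$ that lies in $Q\setminus T$ and uses fewer than $\abs{Q}=n$ transitions; it occurs in $\M^{\straa^{*}}$ with probability at least~$\alpha^{n}$, and after reaching~$B$ along it the run stays in $B\subseteq Q\setminus T$ forever. Hence $\Pr^{\straa^{*}}_{q_0}(\Box(Q\setminus T))\ge\alpha^{n}$, i.e.\ $v\le 1-\alpha^{n}$.

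To conclude, for any strategy~$\straa$ and any $i\ge 0$ we have $\Pr^{\straa}_{q_0}(\Diamond^{i}\,T)\le\Pr^{\straa}_{q_0}(\Diamond T)\le v\le 1-\alpha^{n}$; writing $\Pr^{\straa}_{d_0}(\Diamond^{i}\,T)=\sum_{q\in\Supp(d_0)}d_0(q)\cdot\Pr^{\straa}_{q}(\Diamond^{i}\,T)$, bounding the $q_0$-summand by $1-\alpha^{n}$ and the others by~$1$, and using $d_0(q_0)\ge\alpha_0$, gives $\Pr^{\straa}_{d_0}(\Diamond^{i}\,T)\le 1-d_0(q_0)\,\alpha^{n}\le 1-\alpha_0\,\alpha^{n}$.

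The delicate point is the bound $v\le 1-\alpha^{n}$. One is tempted to argue directly on~$\M$ that from~$q_0$ every strategy reaches a $T$-avoiding trap within $n$ steps with probability at least~$\alpha^{n}$, but this fails: a strategy may keep the run idling in a state that is itself outside~$T$ yet could reach~$T$ under a different action. Replacing an arbitrary strategy by a memoryless \emph{optimal} one is precisely what turns the dynamics into a finite Markov chain, in which the run is inevitably absorbed in a bottom SCC and the trap argument does go through.
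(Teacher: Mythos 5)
Your proof is correct but takes a genuinely different route from the paper's. The paper argues directly on $\M$ for \emph{arbitrary} strategies, using the $\mu$-calculus characterization $\nu Y.\mu X.\,T\cup\APre(Y,X)$ of the almost-sure winning region: it assigns a rank to every non-winning state (the stage of the outer fixpoint iteration at which it is removed), shows that from a rank-$i$ state every action either drops rank with probability at least $\alpha$ or keeps all successors at rank $\le i$, and then does an induction on rank to get $\Pr^{\straa}_{q_0}(\Box(Q\setminus T))\ge\alpha^n$ for every strategy. You instead use the classical fact that maximal reachability probabilities are attained by a pure memoryless strategy $\straa^*$, bound $\sup_{\straa}\Pr^{\straa}_{q_0}(\Diamond T)$ by computing it in the finite Markov chain $\M^{\straa^*}$, and read off the bound from a simple $T$-avoiding path into a $T$-disjoint bottom SCC. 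Both are standard; yours is arguably shorter and more elementary once you grant existence of optimal memoryless strategies, whereas the paper's rank argument is self-contained relative to the fixpoint characterization it already uses elsewhere and avoids the detour through $\straa^*$. Your closing remark about why the na\"{\i}ve trap argument fails is a fair self-explanation, but note that the paper's rank induction is not that na\"{\i}ve argument — it doesn't claim a $T$-avoiding trap is reached within $n$ steps, only that the rank cannot increase and decreases with probability $\ge\alpha$ unless all successors stay in the current level set. One small cosmetic point: your simple path has at most $n-1$ edges, so you actually prove the slightly sharper $v\le 1-\alpha^{n-1}$; stating $\alpha^n$ is of course still valid.
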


\begin{longversion}

\begin{proof}\label{proof:almost-sure-reach-MDP}
The set of almost-sure winning states for the reachability objective 
$\Diamond T$ can be computed 
by graph-theoretic algorithms~\cite{CY95}, and can be expressed succinctly 
by the fixpoint $\mu$-calculus formula 
$\varphi_{{\sf AS}} = \nu Y.\mu X.~\!T \cup \APre(Y,X)$.
We briefly recall the interpretation of  $\mu$-calculus formulas~\cite{BS07,BW18}
based on Knaster-Tarski theorem.
Given a monotonic function $\psi: 2^Q \to 2^Q$
(i.e., such that $X \subseteq Y$ implies $\psi(X) \subseteq \psi(Y)$),
the expression $\nu Y. \psi(Y)$ is the (unique) greatest fixpoint of $\psi$,
which can be computed as the limit of the sequence $(Y_i)_{i \in \nat}$ defined by
$Y_0 = Q$, and $Y_{i} = \psi(Y_{i-1})$ for all $i \geq 1$.
Dually, the expression $\mu X.~\psi(X)$ is the (unique) least fixpoint of $\psi$,
and the limit of the sequence $(X_i)_{i \in \nat}$ defined by
$X_0 = \emptyset$, and $X_{i} = \psi(X_{i-1})$ for all $i \geq 1$.
If $\abs{Q} = n$, then it is not difficult to see that the 
limit of those sequences is reached after at most $n$ iterations, 
$X_{n} = X_{n+1}$ and $Y_{n} = Y_{n+1}$.

Intuitively, the formula $\varphi_{{\sf AS}}$ computes the largest set $S$ of states
such that every state $q \in S$ has a strategy to ensure reaching $T$ with 
positive probability, while at the same time staying in $S$ with probability~$1$.
It follows that $S$ is the set of all states from which there exists a strategy 
to reach $T$ with probability~$1$.

Now consider the states $q_0 \not\in \varphi_{{\sf AS}}$ that are not almost-sure 
winning, and let the rank of $q_0$ be the least integer $i$ such that
$q_0 \in Y_{i}$ and $q_0 \not\in Y_{i+1}$.
Let $\Omega_{i} = Y_i \setminus Y_{i+1}$ for all $i \geq 0$ 
be the set of states of rank~$i$, and for $\mbox{$\sim$}\!\in\!\{<,\leq\}$
let $\Omega_{\sim i} = \bigcup_{j\sim i} \Omega_{j}$. 
It is easy to show by induction that $Q \setminus Y_i = \Omega_{<i}$ 
since $\Omega_{<i} = (Y_0 \setminus Y_{1}) \cup (Y_1 \setminus Y_{2}) \cup \dots 
 \cup (Y_{i-1} \setminus Y_{i})$ and $Y_0 = Q$. Moreover since $T \subseteq Y_i$
we have $\Omega_{i} \cap T = \emptyset$.

For $q_0 \in \Omega_{i}$, we show that  
$\Pr^{\straa}_{q_0}(\Box \Omega_i \cup \Diamond \Omega_{<i}) \geq \alpha$ for all
strategies $\straa$ (where $\alpha$ is the smallest positive probability in the transitions of the MDP).
Since $q_0 \not\in Y_{i+1} = \mu X.~T \cup \APre(Y_i,X)$, we have 
$q_0 \not\in \APre(Y_{i},Y_{i+1})$.
Then for all actions $a \in \Act$ we have:
$$\text{either } \Supp(\delta(q_0,a)) \not\subseteq Y_{i}, \text{ or }
  \Supp(\delta(q_0,a)) \cap Y_{i+1} = \emptyset,$$
\noindent that is 
$$\text{either } \Supp(\delta(q_0,a)) \cap \Omega_{<i} \neq \emptyset,
  \text{ or } \Supp(\delta(q_0,a)) \subseteq \Omega_{\leq i}.$$ 
It follows in both cases
that if not all successors of $q_0$ on action $a$ are in $\Omega_{i}$,
then at least one of them is in $\Omega_{<i}$, which entails that 
$\Pr^{\straa}_{q_0}(\Box \Omega_i \cup \Diamond \Omega_{<i}) \geq \alpha$.
For $i=0$, since $\Omega_{<0} = \emptyset$, we have 
$\Pr^{\straa}_{q_0}(\Box \Omega_0) = 1$ and thus $\Pr^{\straa}_{q_0}(\Box (Q \setminus T)) = 1$ 
for all $q_0 \in \Omega_{0}$ and all strategies $\straa$.  
Inductively, if $\Pr^{\straa}_{q_0}(\Box (Q \setminus T)) \geq p_i$ for all 
$q_0 \in \Omega_{< i}$ and all strategies $\straa$, then   
$\Pr^{\straa}_{q_0}(\Box (Q \setminus T)) \geq \alpha \cdot p_i$ for all $q_0 \in \Omega_{i}$
and all strategies $\straa$. 

It follows that if $q_0 \not\in \varphi_{{\sf AS}}$ is not almost-sure winning
for the reachability objective $\Diamond T$, then 
$q_0 \in \Omega_{<n}$ and $\Pr^{\straa}_{q_0}(\Box (Q \setminus T)) \geq \alpha^n$
for all strategies $\straa$, which entails the result of the lemma.  
\qed
\end{proof}
\end{longversion}

In Lemma~\ref{lem:almost-sure-reach-MDP} it is crucial to notice that the bound 
$\alpha_0 \cdot \alpha^{n}$ is independent of the number $i$ of steps.

\paragraph{Synchronizing objectives.}
We consider MDPs as generators of sequences of probability distributions 
over states~\cite{KVAK10}.
Given an initial distribution $d_0 \in \dist(Q)$ and a strategy $\straa$ in $\M$,
the sequence $\M^{\straa} = (\M^{\straa}_i)_{i \in \nat}$ of probability distributions (from $d_0$,
which we assume is clear from the context)
is defined by $\M^{\straa}_i(q) = Pr^{\straa}_{d_0}(\Diamond^{i}\, \{q\})$ for all $i \geq 0$ 
and $q \in Q$.
Hence, $\M^{\straa}_i$ is the probability distribution over states after $i$ steps
under strategy $\straa$. Note that $\M^{\straa}_0 = d_0$. 

Informally, synchronizing objectives require that the probability of some set~$T$
of states tends to $1$ in the sequence $(\M^{\straa}_i)_{i\in \nat}$,
either always, once, infinitely often, or always after some point~\cite{DMS19}. 
Given a target set $T \subseteq Q$,
we say that a probability distribution $d$ is \emph{$p$-synchronized} in $T$ 
if $d(T) \geq p$ (and strictly $p$-synchronized in $T$ if $d(T) > p$), and 
that a sequence $d_0 d_1 \dots$ of probability distributions is:
\begin{itemize}
\item[$(a)$] \emph{always $p$-synchronizing} if $d_i$ is $p$-synchronized (in $T$) for all $i \geq 0$;
\item[$(b)$] \emph{event(ually) $p$-synchronizing} if $d_i$ is $p$-synchronized (in $T$) for some $i \geq 0$;
\item[$(c)$] \emph{weakly $p$-synchronizing} if $d_i$ is $p$-synchronized (in $T$) for infinitely many $i$'s;
\item[$(d)$] \emph{strongly $p$-synchronizing} if $d_i$ is $p$-synchronized (in $T$) for all but finitely many $i$'s.
\end{itemize}

Given an initial distribution $d_0$, 
we say that for the objective of \{always, eventually, weakly, strongly\} synchronizing
from~$d_0$, the MDP $\M$ is:
\begin{itemize}
\item \emph{sure winning} if there exists a strategy $\straa$ such that
the sequence $\M^{\straa}$ from $d_0$ 
is \{always, eventually, weakly, strongly\} $1$-synchronizing in $T$;
\item \emph{almost-sure winning} if there exists a strategy $\straa$ such that 
for all $\epsilon>0$ the sequence $\M^{\straa}$ from $d_0$  is 
\{always, eventually, weakly, strongly\} $(1-\epsilon)$-synchronizing in $T$;
\item \emph{limit-sure winning} if for all $\epsilon>0$, there exists a strategy $\straa$
such that the sequence $\M^{\straa}$ from $d_0$  is 
\{always, eventually, weakly, strongly\} $(1-\epsilon)$-synchronizing in $T$;
\end{itemize}

For $\lambda \in \{always, event, weakly, strongly\}$, 
we denote by $\winsure{\lambda}(T)$ the \emph{winning region} 
defined as the set of initial distributions $d_0$ such that
$\M$ is sure winning for $\lambda$-synchronizing in $T$ 
(in this notation, we assume that $\M$ is clear from the context). 
We define analogously the winning regions $\winas{\lambda}(T)$ and $\winlim{\lambda}(T)$
of almost-sure and limit-sure winning distributions.

It is known that for all winning modes, only the support of the
initial distributions is relevant, that is for every winning region 
$W = \win{\lambda}{\mu}(T)$ (where $\mu \in \{sure, almost, limit\}$),
for all distributions $d,d'$, if $\Supp(d) = \Supp(d')$, then 
$d \in W$ if and only if $d' \in W$~\cite{DMS19}.
Therefore, in the sequel we sometimes write $S \in \win{\lambda}{\mu}(T)$,
which can be read as any distribution $d$ with support $S$ is in $\win{\lambda}{\mu}(T)$. 
For each synchronizing mode $\lambda$ and winning mode $\mu$,
the membership problem asks to decide,
given an MDP $\M$, a target set $T$, and a set $S$, whether
$S \in \win{\lambda}{\mu}(T)$.

\begin{figure}[t]
\begin{center}
    \begin{picture}(78,18)(0,0)

\node[Nmarks=i,iangle=180](n0)(10,4){$q_{0}$}
\node[Nmarks=n](n1)(30,4){$q_1$}
\node[Nmarks=r](n2)(50,4){$q_2$}
\node[Nmarks=n](n3)(70,4){$q_3$}

\drawedge(n0,n1){$\frac{1}{2}$}
\drawloop[ELside=l,loopCW=y, loopangle=90, loopdiam=5](n0){$\frac{1}{2}$}

\drawedge(n1,n2){$b$}
\drawloop[ELside=l,loopCW=y, loopangle=90, loopdiam=5](n1){$a$}

\drawedge(n2,n3){$a,b$}
\drawloop[ELside=l,loopCW=y, loopangle=90, loopdiam=5](n3){$a,b$}

\end{picture}
\end{center}
 \caption{An MDP where $\lbrace q_0 \rbrace \in \winlim{event}(\{q_2\}).$\label{fig:almost-limit-eventually-differ}}
\end{figure}

\smallskip
Consider the MDP in \figurename~\ref{fig:almost-limit-eventually-differ}, 
with initial state $q_0$ and target set $T = \{q_2\}$.
The probability mass cannot loop through $q_2$ and therefore, 
it is immediate that the MDP is neither always, nor weakly, nor strongly $(1-\epsilon)$-synchronizing,
thus $\{q_0\} \not\in \winas{\lambda}(T)$ for $\lambda = always, weakly, strongly$,
and thus also $\{q_0\} \not\in \winsure{\lambda}(T)$.

For eventually synchronizing in $q_2$, at every step, half of the probability 
mass in $q_0$ stays in $q_0$ while the other half is sent to $q_1$. 
Thus, the probability mass in $q_0$ tends to $0$ but is strictly positive 
at every step, and the MDP is not sure eventually synchronizing, $\{q_0\} \not\in \winsure{event}(T)$.
In state $q_1$, action $a$ keeps the probability mass in $q_1$, while action $b$ 
sends it to the target state $q_2$.  
If action $b$ is never chosen, then $q_2$ is never reached, and whenever $b$ is 
chosen, a strictly positive probability mass remains in $q_0$, thus the MDP is 
not almost-sure eventually synchronizing, $\{q_0\} \not\in \winas{event}(T)$.
On the other hand, for every $\epsilon > 0$, the strategy that plays $a$ in $q_1$ 
for $k$ steps such that $\frac{1}{2^k} < \epsilon$, and then plays $b$, 
is winning for eventually $(1-\epsilon)$-synchronizing in $T$. 
Thus the MDP is limit-sure eventually synchronizing, $\{q_0\} \in \winlim{event}(T)$.

The MDP in \figurename~\ref{fig:inf-mem} is also limit-sure eventually synchronizing in $\{q_2\}$.
As the probability mass is sent back to $q_0$ from $q_2$,
the MDP is even almost-sure weakly (and eventually) synchronizing,
using a strategy that plays action~$a$ in $q_1$ for $k$ steps to
accumulate probability mass $1-\frac{1}{2^k}$ in $q_1$, then plays action~$b$
and repeats the same pattern for increasing values of $k$.

\begin{figure}[t]
\begin{center}
    \begin{picture}(58,23)

\node[Nmarks=i,iangle=180](n0)(10,9){$q_{0}$}
\node[Nmarks=n](n1)(30,9){$q_1$}
\node[Nmarks=r](n2)(50,9){$q_2$}

\drawedge[ELdist=1](n0,n1){$\frac{1}{2}$}
\drawloop[ELside=l, loopCW=y, loopangle=90, loopdiam=5](n0){$\frac{1}{2}$}

\drawedge(n1,n2){$b$}
\drawloop[ELside=r,loopCW=n, loopangle=90, loopdiam=5](n1){$a$}

\drawedge[ELpos=50, ELdist=.5, ELside=r, curvedepth=10](n2,n0){$a,b$}

\end{picture}
\end{center}
 \caption{An MDP where $\lbrace q_0 \rbrace \in \winas{weakly}(\{q_2\}).$ \label{fig:inf-mem}}
\end{figure}


\paragraph{End-components.}

Given a state $q \in Q$ and a set $S \subseteq Q$,
let $\Act_S(q)$ be the set of all actions $a \in \Act$ such 
that $\Supp(\delta(q,a)) \subseteq S$. 
A \emph{closed} set in an MDP is a set $S \subseteq Q$ such 
that $\Act_S(q) \neq \emptyset$ for all $q \in S$. 
A set $S \subseteq Q$ is an {\em end-component}~\cite{deAlfaro97,BK08} if 
(i)~$S$ is closed, and 
(ii)~the graph $(S,E_S)$ is strongly connected
where $E_S=\{(q,q') \in S \times S \mid \delta(q,a)(q') > 0 \text{ for some } a \in \Act_S(q)\}$
denote the set of edges given the actions. 
In the sequel, end-components should be considered \emph{maximal}, that is 
such that no strict superset is an end-component.
We denote by $\Epsilon$ the union of all end-components, 
and for $q \in \Epsilon$, we denote by $\Epsilon(q)$ the maximal end-component
containing $q$.
A fundamental property of end-components is that under arbitrary strategies,
with probability~$1$ the set of states visited infinitely often along a 
path is an end-component.

\begin{lemma}[\cite{CY95,deAlfaro97}]\label{lem:ec}
Let $\M$ be an MDP.   
For all strategies $\straa$, we have $\liminf_{i \to \infty} \M^{\straa}_{i}(\Epsilon) = 1$.
\end{lemma}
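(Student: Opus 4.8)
The plan is to reduce the statement to a classical fact about end-components: under any strategy, with probability~$1$ the set of states visited infinitely often along a path forms an end-component, so the path eventually stays forever inside $\Epsilon$. I would first recall (citing \cite{CY95,deAlfaro97}) that for every strategy $\straa$ and every initial distribution, the event $\Diamond\Box\,\Epsilon = \{\pi \mid \exists N\ \forall i\geq N:\ q_i \in \Epsilon\}$ has probability~$1$. Indeed, fix a path $\pi = q_0 a_0 q_1 a_1 \dots$ and let $\mathit{Inf}(\pi)$ be the set of states occurring infinitely often; the standard argument shows $\Pr^{\straa}_{d_0}(\{\pi \mid \mathit{Inf}(\pi) \text{ is an end-component}\}) = 1$, and since every end-component is a subset of $\Epsilon$, almost every path has all but finitely many states in $\Epsilon$.

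From this I would extract the quantitative liminf claim. For each $N \in \nat$, let $A_N = \{\pi \mid \forall i \geq N:\ q_i \in \Epsilon\}$. These events are increasing in $N$ and their union is $\Diamond\Box\,\Epsilon$, so $\Pr^{\straa}_{d_0}(A_N) \to 1$ as $N \to \infty$. Now observe that on the event $A_N$ we have $q_i \in \Epsilon$ for every $i \geq N$; hence for every $i \geq N$,
\[
\M^{\straa}_i(\Epsilon) = \Pr^{\straa}_{d_0}(\Diamond^{i}\,\Epsilon) \geq \Pr^{\straa}_{d_0}(A_N).
\]
Fixing $\epsilon > 0$, choose $N$ with $\Pr^{\straa}_{d_0}(A_N) \geq 1 - \epsilon$; then $\M^{\straa}_i(\Epsilon) \geq 1 - \epsilon$ for all $i \geq N$, which shows $\liminf_{i\to\infty}\M^{\straa}_i(\Epsilon) \geq 1-\epsilon$. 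Since $\epsilon$ was arbitrary and $\M^{\straa}_i(\Epsilon) \leq 1$ always, we conclude $\liminf_{i\to\infty}\M^{\straa}_i(\Epsilon) = 1$.

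The one point that needs a little care — and the only genuine obstacle — is the first step, namely justifying $\Pr^{\straa}_{d_0}(\Diamond\Box\,\Epsilon) = 1$ under \emph{arbitrary} (history-dependent, randomized) strategies, not just memoryless ones. This is exactly the content of the cited end-component theorem \cite{CY95,deAlfaro97}: one fixes the strategy, considers the (possibly infinite-state) Markov chain it induces on $(Q\Act)^{*}Q$, and argues that the bottom strongly connected behaviour projects onto an end-component of $\M$; the key combinatorial lemma is that if a set $S$ of states is visited infinitely often with positive probability then from some point on only actions $a$ with $\Supp(\delta(q,a)) \subseteq S$ are effectively used on that part of the measure, forcing $S$ to be closed and strongly connected. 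Since this is standard and stated as an imported result, I would simply invoke it rather than reprove it, and the rest of the argument is the elementary monotone-limit computation above.
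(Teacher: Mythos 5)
Your proof is correct and follows the same route the paper intends: the lemma is stated with a citation to the classical end-component theorem of \cite{CY95,deAlfaro97}, which the paper paraphrases in the preceding sentence (under arbitrary strategies, almost surely the set of states visited infinitely often is an end-component), and your monotone-limit argument is the standard way to extract the $\liminf$ statement from that path-wise almost-sure fact.
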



\paragraph{Tracking counter in MDPs.}
It will be useful to track the number of steps (modulo a given number $r$)
in MDPs. 
Given a number $r \in \nat$, define the MDP $\M \times [r] = \tuple{Q_r, \Act, \delta_r}$
where $Q_r = Q \times \{r-1,\dots,1,0\}$ and $\delta_r$ is defined as follows,
for all $\tuple{q,i}, \tuple{q',j} \in Q_r$ and $a \in \Act$:
$$\delta_r(\tuple{q,i},a)(\tuple{q',j}) = 
\begin{cases}
\delta(q,a)(q') & \text{ if } j=i-1 \!\!\mod r,\\
0 & \text{ otherwise.}
\end{cases}
$$

For a distribution $d \in \dist(Q)$ and $0 \leq t < r$, 
we denote by $d \times \{t\}$ the distribution defined, for all $q \in Q$, 
by $d \times \{t\}(\tuple{q,i}) = d(q)$ if $t=i$, and 
$d \times \{t\}(\tuple{q,i}) = 0$ otherwise.
Given a finite sequence $\rho = q_0 a_0 q_1 a_1 \dots q_n$ in $\M$, 
and $0 \leq t < r$, there is a corresponding sequence 
$\rho' = \tuple{q_0,k_0} a_0 \tuple{q_1,k_1} a_1 \dots \tuple{q_n, k_n}$ in 
$\M \times [r]$ where $k_0 = t$ and $k_{i+1} = k_i - 1 \mod r$ for all $0 \leq i < n$.
Since the sequence $\rho'$ is uniquely defined from $\rho$ and $t$, 
there is a clear bijection between the paths in $\M$ starting in $q_0$
and the paths in $\M \times [r]$ starting in $\tuple{q_0,t}$.
In the sequel, we freely omit to apply and mention this bijection.
In particular, we often consider that a strategy $\straa$ in $\M$
can be played directly in $\M \times [r]$.

\smallskip 

\begin{figure}[t]
\begin{center}
    \begin{picture}(60,31)

\node[Nmarks=i,iangle=180](n0)(10,15){$q_{0}$}
\node[Nmarks=n](n1)(30,25){$q_1$}
\node[Nmarks=r](n2)(50,25){$q_2$}
\node[Nmarks=r](n3)(30,5){$q_3$}
\node[Nmarks=n](n4)(50,5){$q_4$}

\drawedge[ELside=l, ELdist=1](n0,n1){$\frac{1}{2}$}
\drawedge[ELside=r, ELdist=1](n0,n3){$\frac{1}{2}$}


\drawedge[ELpos=50, ELdist=.5, ELside=r, curvedepth=-6](n1,n2){}
\drawedge[ELpos=50, ELdist=.5, ELside=r, curvedepth=-6](n2,n1){}

\drawedge[ELpos=50, ELdist=.5, ELside=r, curvedepth=-6](n3,n4){}
\drawedge[ELpos=50, ELdist=.5, ELside=r, curvedepth=-6](n4,n3){}


\end{picture}
\end{center}
 \caption{A Markov chain with two periodic end-components. \label{fig:periodicMDP}}
\end{figure}
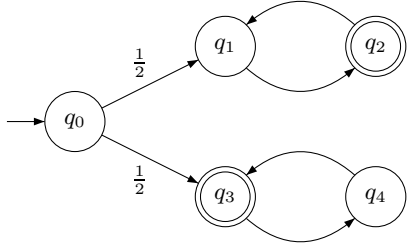

Consider the MDP in \figurename~\ref{fig:periodicMDP} (which is in fact a Markov chain), 
with initial state $q_0$ and target set $T = \{q_2,q_3\}$. 
There are two end-components,
$S_1 = \lbrace q_1, q_2 \rbrace$ and $S_2 = \lbrace q_3, q_4 \rbrace$. 
Although both $S_1$ and $S_2$ are sure eventually synchronizing in $T$ (from 
$q_1$ and $q_3$ respectively), the uniform distribution over $\{q_1,q_3\}$ 
is not even limit-sure eventually synchronizing in $T$. 

\section{Eventually synchronizing}\label{sec:event}

In the rest of this paper, fix an MDP $\M = \tuple{Q,\Act,\delta}$ 
and let $n = \abs{Q}$ be the size of $\M$, and let $\alpha$ be the 
smallest positive probability in the transitions of $\M$.
We first consider eventually synchronizing, that is synchronization
must happen once. 



\subsection{Sure winning}\label{sec:event-sure}
We recall the following characterization of the sure-winning region for 
eventually synchronizing~\cite[Lemma~7]{DMS19}:
$d \in \winsure{event}(T)$ 
if and only if there exists $k \geq 0$ such that $\Supp(d) \subseteq \Pre^{k}(T)$. 
Intuitively, from all states in $\Pre^{i}(T)$, there exists an action to ensure
that the successor is in $\Pre^{i-1}(T)$ (for all $i>0$), 
and therefore there exists a strategy to ensure that the probability mass in 
$\Pre^{k}(T)$ reaches $T$ in exactly $k$ steps.
Now, if $q_0 \not\in \Pre^{k}(T)$, then for all sequences of actions $a_0,\dots,a_{k-1}$ 
there is a path $q_0 a_0 q_1 a_1 \dots q_k$ of length $k$ that ends in $q_k \in Q \setminus T$.
It is easy to derive the following result from this characterization.

\begin{lemma}\label{lem:sure-eventually}
If $d_0 \not\in \winsure{event}(T)$ is not sure eventually synchronizing in~$T$, 
then for all strategies $\straa$, for all $i \geq 0$, we have:
$$\M^{\straa}_{i}(T) \leq 1- \alpha_0 \cdot \alpha^i$$ 
where $\alpha_0$ is the 
smallest positive probability in~$d_0$.
\end{lemma}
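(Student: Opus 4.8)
The plan is to exploit the characterization recalled just above the lemma: since $d_0 \notin \winsure{event}(T)$, for every $k \geq 0$ there is a state $q_0 \in \Supp(d_0)$ with $q_0 \notin \Pre^k(T)$. Fix a step count $i \geq 0$ and a strategy $\straa$. I would like to pin down, before the game starts, a single state that witnesses failure at exactly $i$ steps. By the characterization, $\Supp(d_0) \not\subseteq \Pre^i(T)$, so pick $q_0 \in \Supp(d_0) \setminus \Pre^i(T)$; then $d_0(q_0) \geq \alpha_0$.

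Next I would use the structural fact, also stated in the excerpt, that if $q_0 \notin \Pre^i(T)$ then for every sequence of actions $a_0, \dots, a_{i-1}$ there is a path $q_0 a_0 q_1 a_1 \cdots q_i$ with $q_i \notin T$. This should be strengthened to a statement about the randomized strategy $\straa$: starting from $q_0$ and playing $\straa$, with probability at least $\alpha^i$ the play follows a path of length $i$ ending outside $T$. The cleanest way is induction on $i$. Since $q_0 \notin \Pre^i(T) = \Pre(\Pre^{i-1}(T))$, for the (randomized) action distribution $\straa(q_0)$, there is at least one action $a$ in its support with $\Supp(\delta(q_0,a)) \not\subseteq \Pre^{i-1}(T)$, hence some successor $q_1 \notin \Pre^{i-1}(T)$ with $\delta(q_0,a)(q_1) \geq \alpha$; since $a \in \Supp(\straa(q_0))$ and actions may be mixed, I should argue $\straa(q_0)(a) \geq \alpha$ as well (the smallest positive probability $\alpha$ bounds strategy probabilities too, or, since pure strategies suffice per the excerpt, one may simply restrict to a pure $\straa$ — but the lemma claims the bound for all strategies, so I keep the general argument). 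Thus with probability $\geq \alpha^2$ the play reaches $q_1 \notin \Pre^{i-1}(T)$ in one step; by induction, from $q_1$ there is probability $\geq \alpha^{i-1}$ of reaching a state outside $T$ after $i-1$ more steps. Combining, $\Pr^{\straa}_{q_0}(\Diamond^i (Q \setminus T)) \geq \alpha \cdot \alpha \cdot \alpha^{i-1}$; a small care with constants (the base case $i=0$ gives probability $1$, and each step costs a factor $\alpha$ for the transition and possibly another for the action) shows a clean bound of the form $\alpha^i$ suffices, matching the statement.

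Finally I would combine with the initial distribution: $\M^{\straa}_i(Q \setminus T) = \Pr^{\straa}_{d_0}(\Diamond^i (Q\setminus T)) \geq d_0(q_0) \cdot \Pr^{\straa}_{q_0}(\Diamond^i(Q\setminus T)) \geq \alpha_0 \cdot \alpha^i$, whence $\M^{\straa}_i(T) \leq 1 - \alpha_0 \cdot \alpha^i$, as desired. The key point, analogous to the remark after Lemma~\ref{lem:almost-sure-reach-MDP}, is that the witness state $q_0$ and the escaping path depend on $i$, but the bound degrades only as $\alpha^i$.

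The main obstacle, and the step needing the most care, is the inductive claim that following $\straa$ from a state outside $\Pre^i(T)$ keeps probability $\geq \alpha^i$ on paths that stay "off track" and land outside $T$ at step $i$: one must be careful that $\straa$ is history-dependent (so after one step the relevant object is $\straa$ conditioned on the prefix $q_0 a q_1$, still a valid strategy to which the induction hypothesis applies) and randomized (so the action witnessing $q_0 \notin \Pre(\Pre^{i-1}(T))$ may receive strategy-probability less than $1$ — here one invokes that $\alpha$ also lower-bounds positive strategy probabilities, or equivalently restricts to pure strategies which the excerpt says are sufficient, at the cost of a slightly weaker but still $\alpha^i$-form bound). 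Everything else is bookkeeping with the already-recalled support characterization of $\winsure{event}(T)$.
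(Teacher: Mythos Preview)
Your overall approach matches the paper's (which merely says the lemma ``is easy to derive'' from the characterization of $\winsure{event}(T)$), but there is a genuine gap in your handling of randomized strategies.

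You correctly quote the structural fact that if $q_0 \notin \Pre^{i}(T)$ then \emph{for every} action sequence there is a path of length $i$ ending outside $T$. But in the inductive step you immediately weaken this: you say ``there is at least one action $a$ in the support of $\straa(q_0)$ with $\Supp(\delta(q_0,a)) \not\subseteq \Pre^{i-1}(T)$'', and then worry about lower-bounding $\straa(q_0)(a)$ by~$\alpha$. That lower bound is simply false --- $\alpha$ is defined as the smallest positive \emph{transition} probability in $\M$, and a randomized strategy may assign arbitrarily small positive probability to an action. Your fallback to pure strategies does not rescue the claim ``for all strategies''.

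The fix is to use the full strength of the definition of $\Pre$: since $q_0 \notin \Pre(\Pre^{i-1}(T))$, we have that \emph{for every} action $a \in \Act$ (not just one in $\Supp(\straa(q_0))$) there exists a successor $q_1 \notin \Pre^{i-1}(T)$ with $\delta(q_0,a)(q_1) \geq \alpha$. Hence, conditioned on any history ending in a state outside $\Pre^{i-j}(T)$, whatever distribution over actions the strategy plays, the next state lies outside $\Pre^{i-j-1}(T)$ with probability at least~$\alpha$. The induction then gives $\Pr^{\straa}_{q_0}(\Diamond^i(Q\setminus T)) \geq \alpha^{i}$ directly, with one factor of $\alpha$ per step and no factor for the action choice. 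Combining with $d_0(q_0) \geq \alpha_0$ yields the stated bound, for arbitrary randomized~$\straa$.
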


Note that the bound $1 - \alpha_0 \cdot \alpha^i$ tends to $1$ as $i \to \infty$,
which is unavoidable since  MDPs that are not sure eventually synchronizing 
may be almost-sure eventually synchronizing~\cite{DMS19}.  
The following variant of Lemma~\ref{lem:sure-eventually} will be useful
in the sequel.

\begin{remark}\label{rem:sure-eventually-support}
If $\{q_0\} \not\in \winsure{event}(T)$ and we take 
$\alpha'_0 = d_0(q_0)$, then from the initial distribution $d_0$ we have, 
for all strategies $\straa$, for all $i \geq 0$:
$$\M^{\straa}_{i}(T) \leq 1 - \alpha'_0 \cdot \alpha^i.$$
\end{remark}

\subsection{Limit-sure winning}

If the MDP $\M$ is not limit-sure winning for eventually synchronizing
in $T$, then the probability in $Q \setminus T$ is bounded away 
from $0$ in all distributions in $\M^{\straa}$ (for all strategies $\straa$).
We give an explicit bound $\epsilon_e$ as follows.

\begin{theorem}\label{theo:epsilon-eventually}
Given an initial distribution $d_0$, 
let $\alpha_0$ be the smallest positive probability in~$d_0$, 
and let $\epsilon_e = \alpha_0 \cdot \alpha^{(n+1)\cdot 2^{n}}$.
If $d_0 \not\in \winlim{event}(T)$ is not limit-sure
eventually synchronizing in $T$, 
then for all strategies $\straa$, for all $i \geq 0$, we have:
$$\M^{\straa}_{i}(T) \leq 1-\epsilon_e,$$
that is, no distribution in $\M^{\straa}$ is strictly 
$(1-\epsilon_e)$-synchronizing in $T$.
\end{theorem}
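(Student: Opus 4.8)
The plan is to prove the contrapositive directly: assuming $d_0 \notin \winlim{event}(T)$, I would derive the uniform bound $\M^{\straa}_i(T) \le 1 - \epsilon_e$ for every strategy $\straa$ and every $i \ge 0$, by tracking, through the limit-sure eventually synchronizing algorithm of~\cite{DMS19}, the probability mass that is provably lost at each stage. First I would dispose of the easy regime. Since $\winsure{event}(T) \subseteq \winlim{event}(T)$, the hypothesis also gives $d_0 \notin \winsure{event}(T)$, so Lemma~\ref{lem:sure-eventually} already yields $\M^{\straa}_i(T) \le 1 - \alpha_0 \cdot \alpha^{i}$ for all $\straa$ and $i$; as $\alpha \le 1$, this is $\le 1 - \alpha_0 \cdot \alpha^{(n+1)\cdot 2^{n}} = 1 - \epsilon_e$ whenever $i \le (n+1)\cdot 2^{n}$, so only $i > (n+1)\cdot 2^{n}$ remains. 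It is precisely here that failure of \emph{limit-sure} synchronizing (strictly stronger than failure of \emph{sure} synchronizing) must be used: the bound of Lemma~\ref{lem:sure-eventually} degrades with $i$, whereas the target bound does not.

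Next I would invoke the characterization of $\winlim{event}(T)$ from~\cite{DMS19}. The difficulty that characterization has to overcome is phase alignment: as \figurename~\ref{fig:periodicMDP} illustrates, probability mass that is individually ``ready to synchronize'' in different regions of $\M$ need not be synchronizable jointly, because $(\Pre^{k}(T))_{k}$ is only \emph{eventually} periodic, with tail and period both bounded by $2^{n}$. The characterization passes to the counter product $\M \times [r]$ for a suitable period $r \le 2^{n}$, where $d_0 \in \winlim{event}(T)$ becomes the existence of a residue $\rho$ for which $d_0 \times \{0\}$ is \emph{almost-sure winning} for reaching a set $R_{\rho} \subseteq Q_{r}$ of ``synchronizable'' configurations: informally, the states $\langle q,t\rangle$ from which, at the phase fixed by $\rho$, the canonical $\Pre$-strategy delivers all the mass reaching them into $T$ at one common future step. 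Here one uses Lemma~\ref{lem:ec} (the relevant mass may be assumed to have accumulated inside end-components) and the fact that limit-sure and almost-sure reachability coincide in MDPs, as recalled before Lemma~\ref{lem:almost-sure-reach-MDP}.

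Then I would apply the quantitative lemmas. Since $d_0 \notin \winlim{event}(T)$, for \emph{every} residue $\rho$ the distribution $d_0 \times \{0\}$ fails to be almost-sure winning for $\Diamond R_{\rho}$ in $\M \times [r]$. That MDP has at most $n\cdot r \le n\cdot 2^{n}$ states, its smallest positive transition probability is still $\alpha$, and $d_0 \times \{0\}$ has the same smallest positive value $\alpha_0$ as $d_0$; so Lemma~\ref{lem:almost-sure-reach-MDP} gives that, under every strategy and at every step, a mass of at least $\alpha_0 \cdot \alpha^{n\cdot 2^{n}}$ lies outside $R_{\rho}$. Fixing the given $\straa$ and $i > (n+1)\cdot 2^{n}$, picking $\rho$ according to the phase of step $i$, and looking back at step $j := i - 2^{n} \ge 0$, a mass $\mu \ge \alpha_0 \cdot \alpha^{n\cdot 2^{n}}$ of $\M^{\straa}_{j}$ sits outside $R_{\rho}$; since the phase has been chosen so that the slice of $R_{\rho}$ at step $j$ contains $\Pre^{2^{n}}(T)$ (using that $2^{n}$ lies in the periodic part of $(\Pre^{k}(T))_{k}$), this mass is carried by states outside $\Pre^{2^{n}}(T)$. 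From any state outside $\Pre^{k}(T)$, an immediate induction on $k$ — the one underlying Lemma~\ref{lem:sure-eventually} — shows that under \emph{every} strategy the probability of lying outside $T$ after exactly $k$ steps is at least $\alpha^{k}$. Taking $k = 2^{n}$ gives $\M^{\straa}_{i}(Q \setminus T) \ge \mu \cdot \alpha^{2^{n}} \ge \alpha_0 \cdot \alpha^{(n+1)\cdot 2^{n}} = \epsilon_e$, i.e. $\M^{\straa}_{i}(T) \le 1 - \epsilon_e$.

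The step I expect to be the real obstacle is the interface between these two halves: the target sets $R_{\rho}$ of the counter product must be set up so that, at once, (i) $\winlim{event}(T)$ is \emph{exactly} almost-sure reachability of one of the $R_{\rho}$, and (ii) mass that never enters $R_{\rho}$ can be shown, uniformly in $i$, to miss $T$ ``in bulk'' at the designated step — which is what lets the $2^{n}$-bounded tail of the $\Pre$-recurrence act as the horizon of the closing short-range estimate. Everything else (the regime split, the applications of Lemmas~\ref{lem:almost-sure-reach-MDP} and~\ref{lem:sure-eventually}, the bound $n\cdot 2^{n}$ on $\abs{Q_{r}}$, and the $\alpha^{2^{n}}$ loss of the final window) should be routine once that interface is fixed.
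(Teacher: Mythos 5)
Your plan follows essentially the same route as the paper: split into an easy regime handled by Lemma~\ref{lem:sure-eventually}, pass to the counter product $\M\times[r]$ via the characterization of~\cite{DMS19} to obtain (from Lemma~\ref{lem:almost-sure-reach-MDP}) a step-uniform lower bound on the mass outside the periodic iterate $R$ of $\Pre$, and close with the $\alpha^{k}$ back-propagation from outside $\Pre^{k}(T)$ to outside $T$. The only differences are cosmetic (the paper uses threshold $2^{n}$ rather than $(n+1)\cdot 2^{n}$ for the regime split, and shifts the counter on the initial distribution rather than on the target, directly concluding $\M^{\straa}_{i}(Q\setminus R)\ge \alpha_0\alpha^{n\cdot 2^n}$ for all $i$ and then pushing $k<2^{n}$ steps forward); the ``interface'' you flag as the real obstacle is exactly~\cite[Lemma~12]{DMS19}, which the paper invokes verbatim.
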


\footnotetext[1]{The results of~\cite[Lemma~11~\&~12]{DMS19} consider
a more general definition of limit-sure synchronizing, 
where the support of the $(1-\epsilon)$-synchronizing 
distribution is required to have its support contained 
in a given set $Z$. We release this constraint by taking $Z = Q$.}

\begin{proof} 
We recall the characterization\footnotemark[1]
of~\cite[Lemma~11]{DMS19} for limit-sure synchronizing in an arbitrary set $T$. 
For all $k\geq 0$, we have 
$$\winlim{event}(T) = \winsure{event}(T) \cup \winlim{event}(R),$$
where $R = \Pre^{k}(T)$.  

We use this characterization with a specific $k$ (and $R$) as follows.
Consider the sequence of predecessors~$\Pre^{i}(T)$ (for $i=1,2,\dots$),
which is ultimately periodic. Let $0 \leq k < 2^{n}$ and 
$1 \leq r < 2^{n}$ be such that $\Pre^{k}(T) = \Pre^{k+r}(T)$, 
and let $R = \Pre^{k}(T)$. Thus $R = \Pre^{k+r}(T) = \Pre^{r}(R)$.

Since $d_0 \not\in \winlim{event}(T)$, we have:
$$(a)~d_0 \not\in \winsure{event}(T), \ \text{ and }\ (b)~d_0 \not\in \winlim{event}(R).$$
By~$(a)$, it follows from Lemma~\ref{lem:sure-eventually} 
that $\M^{\straa}_{i}(T) \leq 1 - \alpha_0 \cdot \alpha^i$
for all strategies $\straa$ and all $i \geq 0$, which establishes
the bound in the lemma for the first $2^{n}$ steps,
since $\alpha_0 \cdot \alpha^i \geq \epsilon_e$ for all $i \leq 2^{n}$. 

We now recall the characterization\footnotemark[1] of~\cite[Lemma~12]{DMS19} 
for limit-sure synchronizing in the set $R$, which has the property that 
$R = \Pre^{r}(R)$:
$d_0 \in \winlim{event}(R)$ if and only if  
there exists $0 \leq t < r$ such that $d_0 \times \{t\}$ 
is almost-sure winning for the reachability objective 
$\Diamond (R \times \{0\})$ in the MDP $\M \times [r]$.
By~$(b)$, it follows that for all $0 \leq t < r$, the distribution
$d_0 \times \{t\}$ is \emph{not} almost-sure winning for the reachability 
objective $\Diamond (R \times \{0\})$ in the MDP $\M \times [r]$.

Let $\N = \M \times [r]$.
By Lemma~\ref{lem:almost-sure-reach-MDP}, 
from all distributions $d_0 \times \{t\}$ (for all $0 \leq t < r$), 
for all strategies $\straa$ and all $i \geq 0$,
we have:
$$\N^{\straa}_{i}(R \times \{0\}) \leq 
1 - \alpha_0 \cdot \alpha^{\abs{Q_r}} = 1- \alpha_0 \cdot  \alpha^{n \cdot 2^{n}}.$$
Since this holds for all $t=0,\dots,r-1$, 
we conclude that $\M^{\straa}_{i}(Q \setminus R) \geq \alpha_0 \cdot \alpha^{n \cdot 2^{n}}$
in the original MDP $\M$ from $d_0$, for all strategies $\straa$ and all $i \geq 0$.

Since $R = \Pre^{k}(T)$, it follows from 
Lemma~\ref{lem:sure-eventually} and Remark~\ref{rem:sure-eventually-support}
that, if at step $i$ a mass of probability $p$ is outside $R$,
then at step $i+k$ a mass of probability at least $p \cdot \alpha^k$ 
is outside $T$. Hence we have 
$\M^{\straa}_{i+k}(Q \setminus T) \geq \alpha_0 \cdot \alpha^{n \cdot 2^{n}} \cdot \alpha^k 
\geq \alpha_0 \cdot \alpha^{(n+1)\cdot 2^{n}}$
for all strategies $\straa$ and for all $i \geq 0$, 
which implies $\M^{\straa}_{i}(Q \setminus T) \geq \alpha_0 \cdot \alpha^{(n+1)\cdot 2^{n}}$
for all $i \geq 2^{n}$ (since $k < 2^{n}$).

Combining the results for $i \leq 2^{n}$ and for $i \geq 2^{n}$,
we get $\M^{\straa}_{i}(T) \leq 1-\epsilon_e$ for all $i \geq 0$, which
concludes the proof. \qed
\end{proof} 

Theorem~\ref{theo:epsilon-eventually} also gives a sufficient condition
that can be used as an alternative to~\cite[Lemma~11]{DMS19}
to show that an MDP is limit-sure eventually synchronizing.
This will be useful in the proof of our main result (Theorem~\ref{theo:epsilon-weakly}).

A variant of Theorem~\ref{theo:epsilon-eventually} is obtained
by observing that if $d_0 \not\in \winlim{event}(T)$, there 
exists a set $S_0 \subseteq \Supp(d_0)$ such that 
$S_0 \not\in \winlim{event}(T)$. It may be that $S_0$
is a strict subset of $\Supp(d_0)$, and then it is sufficient 
to consider $\alpha_0$ as the smallest positive probability of $d_0$
on $S_0$. 

\begin{remark}\label{rem:epsilon-eventually-support}
If $S_0 \not\in \winlim{event}(T)$ and $S_0 \subseteq \Supp(d_0)$,
then we can define $\alpha_0$ by $\min \{d_0(q) \mid q \in S_0\}$
in the bound $\epsilon_e$ of Theorem~\ref{theo:epsilon-eventually}.
\end{remark}

\subsection{Almost-sure winning}
A simple argument shows that the almost-sure winning region 
for eventually synchronizing consists of the union of the
sure winning region for eventually synchronizing and
the almost-sure winning region for weakly synchronizing~\cite[Section 5.1.2]{Shi14},
that is $\winas{event}(T) = \winsure{event}(T) \cup \winas{weakly}(T)$.

It follows that if $d_0 \not\in \winas{event}(T)$, then both 
$d_0 \not\in \winsure{event}(T)$ and $d_0 \not\in \winas{weakly}(T)$,
and we can use both the results of 
Lemma~\ref{lem:sure-eventually} and Theorem~\ref{theo:epsilon-weakly}.

\section{Weakly synchronizing}\label{sec:weakly}

We now consider weakly synchronizing, which intuitively requires that 
synchronization happens infinitely often.

\subsection{Sure winning}

We recall the following characterization of the sure-winning region for 
weakly synchronizing~\cite[Lemma~18]{DMS19}:
for all distributions $d_0 \in \dist(Q)$, we have $d_0 \in \winsure{weakly}(T)$ 
if and only if there exists a set $S \subseteq T$ such that 
$\Supp(d) \subseteq \Pre^k(S)$ for some $k \geq 0$,
and $S \subseteq \Pre^r(S)$ for some $r \geq 1$.

\begin{lemma}\label{lem:sure-weakly}
If $d_0 \not\in \winsure{weakly}(T)$ is not sure weakly synchronizing in~$T$, 
then for all strategies $\straa$,
in the sequence $\M^{\straa}$ there are at most $2^{n}$ distributions that 
are $1$-synchronized in $T$, that is
$\M^{\straa}_{i}(T) = 1$ for at most $2^{n}$ values of $i$.
\end{lemma}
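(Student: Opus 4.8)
The plan is to argue by contradiction: if there were strictly more than $2^{n}$ indices $i$ with $\M^{\straa}_i(T) = 1$, then two of these "fully synchronized" distributions would have supports that are equal \emph{as subsets of $Q$}, because there are only $2^{n}$ subsets of $Q$. More carefully, I would fix a strategy $\straa$ and suppose $i_0 < i_1 < \dots < i_{2^{n}}$ are $2^{n}+1$ indices at which $\M^{\straa}_{i_j}(T) = 1$. By the pigeonhole principle, two of them, say $i = i_a$ and $i' = i_b$ with $i < i'$, satisfy $\Supp(\M^{\straa}_i) = \Supp(\M^{\straa}_{i'})$; call this common support $S$. Note that $S \subseteq T$ since both distributions are $1$-synchronized in $T$.

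Next I would extract from these two indices the two structural witnesses demanded by the sure-winning characterization of~\cite[Lemma~18]{DMS19}. First, since from $d_0$ the strategy $\straa$ places all the mass in $S$ after $i_0$ steps (taking $k = i_0$ and noting $S = \Supp(\M^{\straa}_{i_0})$ after possibly re-choosing the pigeonhole pair so that $i = i_0$ — or more simply using that the mass reaches $S$ at step $i$), every state that is eventually reached with positive probability at step $i$ lies in $S$; since actions are available from each state and the mass does land exactly on $S$, a straightforward argument gives $\Supp(d_0) \subseteq \Pre^{i}(S)$. Second, the fact that from the distribution at step $i$ — whose support is $S$ — the strategy again produces, $r := i' - i \geq 1$ steps later, a distribution whose support is again $S$, shows that from every state of $S$ one can guarantee reaching $S$ in exactly $r$ steps, i.e. $S \subseteq \Pre^{r}(S)$. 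With $S \subseteq T$, $\Supp(d_0) \subseteq \Pre^{i}(S)$, and $S \subseteq \Pre^{r}(S)$, the characterization yields $d_0 \in \winsure{weakly}(T)$, contradicting the hypothesis.

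To justify the two containments rigorously, I would use the deterministic flavour of the $\Pre$ operator: if from a set of states $X$ a strategy ensures (with probability $1$) that after one step the mass lies in a set $Y$, then $X \subseteq \Pre(Y)$, because for each $q \in X$ the action chosen (with positive probability) by $\straa$ after the relevant history must have all its successors in $Y$ — otherwise some mass would leak out of $Y$. Iterating this over the $i$ steps from $d_0$ to step $i$, and over the $r$ steps from step $i$ to step $i'$, gives exactly $\Supp(d_0) \subseteq \Pre^{i}(S)$ and $S \subseteq \Pre^{r}(S)$. A small subtlety is that $\Pre$ as defined is "for all actions in the support of the strategy", so one should check that a \emph{pure} witness strategy exists, but this is immediate: whenever the whole mass is forced into $Y$, \emph{every} action played with positive probability is a valid $\Pre$-witness, so one may simply select one.

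The main obstacle I expect is the bookkeeping in the pigeonhole step: I want the \emph{same} index $i$ to play the role of both the "$k$" in $\Supp(d_0)\subseteq\Pre^{k}(S)$ and the lower endpoint of the period, so I should be a little careful about which pair of the $2^{n}+1$ indices I select — ideally the two smallest indices with equal support, or simply note that $\Supp(d_0) \subseteq \Pre^{i}(S)$ holds for \emph{any} index $i$ at which the mass is in $S$, so picking the pigeonhole pair $(i,i')$ arbitrarily and using $k=i$ is fine. The rest is routine once the structural characterization is invoked.
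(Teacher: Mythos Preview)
Your proposal is correct and takes essentially the same approach as the paper: both argue by contradiction via pigeonhole on the supports of the $1$-synchronized distributions, then use the repeated support $S \subseteq T$ together with the characterization of $\winsure{weakly}(T)$ from \cite[Lemma~18]{DMS19} to derive $\Supp(d_0) \subseteq \Pre^{k}(S)$ and $S \subseteq \Pre^{r}(S)$. The paper compresses your $\Pre$-containment justification by simply invoking the sure-eventually characterization (Section~\ref{sec:event-sure}) twice --- from $d_0$ to $S$ and from $S$ to $S$ --- but the content is identical, and your bookkeeping worry is indeed a non-issue: any pigeonhole pair $(i,i')$ works with $k=i$ and $r=i'-i$.
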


\begin{longversion}
\begin{proof}
Consider the support $S_j = \Supp(\M^{\straa}_{i_j})$ (where $i_1 < i_2 < \dots$)
of the distributions in the sequence $\M^{\straa}$ that are 
$1$-synchronized in $T$. We show by contradiction, 
that all $S_j$ are distinct, which establishes the result
since they can take at most $2^{n}$ different values.

If $S_j = S_k$ for some $j < k$,
then it is easy to show that $\M$ is sure-winning for eventually
synchronizing in $S = S_j$ from $d_0$, and $\M$ is also 
sure-winning for eventually synchronizing in $S = S_k$ from $S = S_j$,
which implies by the characterization of the sure-winning region
for eventually synchronizing (Section~\ref{sec:event-sure}) that
$\Supp(d) \subseteq \Pre^{i_j}(S)$ and $S \subseteq \Pre^{i_k-i_j}(S)$,
thus $d_0 \in \winsure{weakly}(T)$, which contradicts the assumption
of the lemma. \qed
\end{proof}
\end{longversion}

\subsection{Limit-sure and almost-sure winning}

The winning region for limit-sure and almost-sure weakly synchronizing
coincide~\cite[Theorem~7]{DMS19}. Therefore, in the sequel we treat 
them interchangeably.
We recall the following characterization of almost-sure weakly
synchronizing. 
 
\begin{lemma}[\mbox{\cite[Lemma~23,Theorem~7]{DMS19}}]\label{lem:ls-weakly}
For all distributions $d_0$, the following equivalence holds:
$d_0 \in \winas{weakly}(T)$
if and only if there exists a set $T' \subseteq T$ such that:  
$$d_0 \in \winlim{event}(T') \ \text{ and }\  T' \in \winlim{event}(\Pre(T')).$$
\end{lemma}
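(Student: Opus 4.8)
The plan is to prove the two implications separately, using without further comment two facts recalled above: limit-sure and almost-sure weakly synchronizing coincide, so the statement may be proved with $\winas{weakly}(T)$ replaced by $\winlim{weakly}(T)$; and, by the characterisation of $\winlim{event}$ used in the proof of Theorem~\ref{theo:epsilon-eventually} (which reduces membership in $\winlim{event}(Z)$ to almost-sure reachability of a fixed state set in $\M\times[r]$), the set $\winlim{event}(Z)$ depends only on the support and is \emph{downward closed} in it. I also use that, for a fixed strategy $\straa$, the distribution $\M^{\straa}_i$ is linear in $\M^{\straa}_0$, so the probability mass originating in a given set evolves independently of the rest.

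\textbf{Right to left.} Assume $T'\subseteq T$ with $d_0\in\winlim{event}(T')$ and $T'\in\winlim{event}(\Pre(T'))$. Fix $\epsilon>0$ and put $\eta_k=\epsilon\cdot 2^{-(k+2)}$, so that $\sum_{k\ge 0}\eta_k<\epsilon/2$. I would build a strategy in phases: in phase~$0$, using $d_0\in\winlim{event}(T')$ with precision $\eta_0$, play until a mass $\ge 1-\eta_0$ sits on $T'$; in phase~$k+1$, if a mass $\ge\prod_{j\le k}(1-\eta_j)$ sits on a support $S_k\subseteq T'$, then $S_k\in\winlim{event}(\Pre(T'))$ by downward closedness, so with precision $\eta_{k+1}$ one can route a $(1-\eta_{k+1})$-fraction of that mass into $\Pre(T')$ and, one step later, entirely into $T'$ (choosing in every state of $\Pre(T')$ an action whose support is in $T'$), while extra mass that happens to land in $T'$ only helps. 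At the end of each of the infinitely many phases the current distribution is then $(1-\epsilon)$-synchronized in $T'\subseteq T$, so $\M$ is limit-sure, hence almost-sure, weakly synchronizing in $T$.

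\textbf{Left to right.} Assume $d_0\in\winlim{weakly}(T)$. For each $m\ge 1$ I would fix a strategy $\straa_m$ and infinitely many positions at which $\M^{\straa_m}$ is $(1-1/m)$-synchronized in $T$; at such a position $i$ the mass in $T$ lies entirely on $C_i:=\Supp(\M^{\straa_m}_i)\cap T$. As $T$ has finitely many subsets, some value $C^{(m)}$ occurs at two such positions $i<i'$. Since the mass outside $T$ at $i$ and at $i'$ is at most $1/m$, a $(1-c_m)$-fraction of the unit mass, with $c_m\to 0$, is in $C^{(m)}$ at step $i'$ and originates from $C^{(m)}$ at step $i$; rescaling, from any distribution with support $C^{(m)}$ a $(1-c_m)$-fraction can be driven back into $C^{(m)}$ in $i'-i\ge 1$ steps, and, by a one-step bottleneck estimate (a state outside $\Pre(C^{(m)})$ leaks at least an $\alpha$-fraction of its mass outside $C^{(m)}$), a $(1-c_m)$-fraction can be driven into $\Pre(C^{(m)})$. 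The prefix up to step $i$ likewise shows that from $d_0$ a mass $\ge 1-1/m$ reaches $C^{(m)}$. Applying finiteness once more, some $C^*$ equals $C^{(m)}$ for infinitely many $m$; letting $m\to\infty$ along that subsequence yields $d_0\in\winlim{event}(C^*)$ and $C^*\in\winlim{event}(\Pre(C^*))$, so $T':=C^*$ witnesses the right-hand side.

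\textbf{Main obstacle.} The hard part will be the left-to-right implication: the near-synchronized distributions need not concentrate on a fixed subset of $T$, so one must extract the recurrent set $C^*$ by a two-level pigeonhole (over positions for each fixed precision, then over precisions) and, above all, keep the additive losses under control through the bottleneck estimates and through the passage to the limit $m\to\infty$. A secondary technical point, which is what licenses the iterated routing in the other direction, is establishing that $\winlim{event}$ is downward closed in the support.
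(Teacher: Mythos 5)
The paper does not prove this lemma; it is cited from \cite{DMS19} (Lemma~23 and Theorem~7), so there is no in-paper proof to compare your attempt against, and I can only evaluate your argument on its own.

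Your right-to-left direction matches the intuition the paper sketches immediately after the statement (successively halving $\epsilon$ and re-injecting mass into $T'$ via $\Pre(T')$), and, modulo the usual care with composing history-dependent strategies across phases, it is sound. The downward closedness of $\winlim{event}(\cdot)$ in supports, which licenses the iterated routing, is also correct: both $\winsure{event}$ (characterized by $\Supp(d) \subseteq \Pre^k(T)$) and $\winlim{event}(R)$ (characterized by almost-sure reachability in $\M \times [r]$) are state-wise conditions, so removing states from the support can only help.

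The left-to-right direction has a genuine gap at the rescaling step. You set $C^{(m)} = \Supp(\M^{\straa_m}_i) \cap T$ at two positions $i < i'$, and the flow argument gives that a $(1-c_m)$-fraction of the \emph{specific} normalized restriction $d^*_m$ of $\M^{\straa_m}_i$ to $C^{(m)}$ returns to $C^{(m)}$ after $i'-i$ steps. That does not justify ``from any distribution with support $C^{(m)}$ a $(1-c_m)$-fraction can be driven back'': the smallest mass of $d^*_m$ on $C^{(m)}$ can tend to zero with $m$, so the achievability from $d^*_m$ says nothing about, say, the uniform distribution on $C^{(m)}$, and support-invariance of winning regions cannot rescue this because it gives a uniform bound only once the minimum initial mass is bounded below. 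This bites precisely at the last step, where you fix $C^{(m)} = C^*$ along a subsequence and infer $C^* \in \winlim{event}(\Pre(C^*))$. Consider $C^* = \{q_1, q_2\}$ with $\{q_1\} \in \winlim{event}(\Pre(C^*))$ but $\{q_2\} \notin \winlim{event}(\Pre(C^*))$, so that by downward closedness $C^* \notin \winlim{event}(\Pre(C^*))$; a sequence $d^*_m$ putting mass $1-1/m$ on $q_1$ and $1/m$ on $q_2$ satisfies every intermediate estimate your construction produces, yet your conclusion for $C^*$ is false, and the correct witness $T' = \{q_1\}$ is one your pigeonhole never selects. Repairing this requires either trimming low-mass states from $C^{(m)}$ at a calibrated threshold, or passing to an accumulation point of $(d^*_m)_m$ and shrinking the candidate target --- but shrinking $C^*$ to some $S' \subsetneq C^*$ then forces you to establish $S' \in \winlim{event}(\Pre(S'))$, not merely $S' \in \winlim{event}(\Pre(C^*))$, and $\Pre(S')$ is a smaller and harder target. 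So this is more than the ``keeping additive losses under control'' you flag; the very choice of $T'$ has to be revisited.
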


The condition in Lemma~\ref{lem:ls-weakly} ensures that from $d_0$
almost all the probability mass (namely $1-\epsilon$ for arbitrarily small $\epsilon > 0$)
can be injected in a set $T'$ of target 
states in $0$ or more steps, and that from $T'$ almost all the probability 
mass can be injected in $\Pre(T')$, thus also in $T'$ (but, in at least $1$ step).
Intuitively, by successively halving the value of $\epsilon$ one can 
construct a strategy that ensures almost all the 
probability mass loops through $T'$, 
thus a limit-sure weakly synchronizing strategy (which is equivalent
to the existence of an almost-sure weakly synchronizing strategy). 

If $d_0 \not\in \winas{weakly}(T)$ is not almost-sure weakly synchronizing,
we use Lemma~\ref{lem:ls-weakly} to show that for all sets $T' \subseteq T$,
if $d_0 \in \winlim{event}(T')$ is limit-sure eventually synchronizing in $T'$,
then $T'$ is not limit-sure eventually synchronizing in $\Pre(T')$
(i.e., $T' \not\in \winlim{event}(\Pre(T'))$). This implies that 
a bounded number of distributions in the sequence $\M^{\straa}$
can be $(1-\epsilon)$-synchronized in~$T$ (for sufficiently small $\epsilon$).
We now state the main result of this section.

\begin{theorem}\label{theo:epsilon-weakly}
Given an initial distribution $d_0$,
let $\alpha_0$ be the smallest positive probability in~$d_0$,
and let $\epsilon_w = \alpha_0 \cdot \frac{\alpha^{(n+2)\cdot 4^{n}}}{{n}^{2^{n}+1}}$
and $N_w = 2^{n}$.

If $d_0 \not\in \winas{weakly}(T)$ is not almost-sure weakly synchronizing in~$T$, 
then for all strategies $\straa$, in the sequence $\M^{\straa}$ at most $N_w$ 
distributions are strictly $(1-\epsilon_w)$-synchronized in $T$,
that is $\M^{\straa}_{i}(T) > 1-\epsilon_w$ for at most $N_w$ values of $i$.
\end{theorem}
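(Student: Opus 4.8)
The strategy is to reduce the weakly synchronizing case to the eventually synchronizing bound (Theorem~\ref{theo:epsilon-eventually}) via the characterization in Lemma~\ref{lem:ls-weakly}. Assume $d_0 \not\in \winas{weakly}(T)$. For every subset $T' \subseteq T$, Lemma~\ref{lem:ls-weakly} tells us that we cannot simultaneously have $d_0 \in \winlim{event}(T')$ and $T' \in \winlim{event}(\Pre(T'))$. I would split the $2^n$ possible sets $T'$ into two classes: those for which $d_0 \not\in \winlim{event}(T')$, and those for which $d_0 \in \winlim{event}(T')$ but then necessarily $T' \not\in \winlim{event}(\Pre(T'))$.

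For the first class, apply Theorem~\ref{theo:epsilon-eventually} (with Remark~\ref{rem:epsilon-eventually-support} to handle supports) to get a uniform bound $\M^{\straa}_i(T') \leq 1 - \epsilon_e$ for all $i$ and all $\straa$. For the second class, where $T' \not\in \winlim{event}(\Pre(T'))$, Theorem~\ref{theo:epsilon-eventually} applied to the target $\Pre(T')$ from any distribution supported on $T'$ yields that $\M^{\straa}_i(\Pre(T')) \leq 1 - \epsilon'$ whenever $\M^{\straa}_j$ was concentrated on $T'$ at an earlier step $j$ — but the key point is the contrapositive: if at some step $\M^{\straa}_i(T')$ is very large (close to $1$), then at a bounded number of steps later the mass in $\Pre(T')$, and hence the mass that can re-enter $T'$, is bounded away from $1$. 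The core combinatorial argument is then: consider the sequence of supports $S_i = \Supp(\M^{\straa}_{i})$ restricted to the times $i$ when $\M^{\straa}_i(T) > 1 - \epsilon_w$. I would argue that between two such "highly synchronized" times, the support must change — if the support repeated, one could extract both a limit-sure eventually synchronizing strategy into some $T' \subseteq T$ and a limit-sure eventually synchronizing strategy from $T'$ back into $\Pre(T')$, contradicting Lemma~\ref{lem:ls-weakly}. Since there are at most $2^n$ distinct supports contained in $T$, this gives $N_w = 2^n$.

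The quantitative bound $\epsilon_w$ comes from bookkeeping: one must ensure $\epsilon_w$ is small enough that "$\M^{\straa}_i(T) > 1 - \epsilon_w$" forces enough mass into a \emph{fixed} subset $T'$ of $T$ (there are at most $n$ choices of which state to drop, iterated, giving the $n^{2^n+1}$ denominator — roughly, a distribution with mass $> 1 - \epsilon_w$ in $T$ has mass $> 1 - n\epsilon_w$ in the "heavy" subset, and one may need to iterate this pigeonholing across the $\leq 2^n$ synchronization events). The factor $\alpha^{(n+2)\cdot 4^n}$ reflects nesting Theorem~\ref{theo:epsilon-eventually}'s bound $\alpha^{(n+1)2^n}$ once more: the MDP $\M \times [r]$ used inside Theorem~\ref{theo:epsilon-eventually} for target $\Pre(T')$ already has $n \cdot 2^n$ states, and re-running the limit-sure analysis on it squares the exponent to roughly $4^n$, with the extra $+1$ in $(n+2)$ absorbing the $\Pre$-shift just as in the proof of Theorem~\ref{theo:epsilon-eventually}.

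\textbf{Main obstacle.} The delicate part is making the "support must change" argument fully rigorous while tracking the error $\epsilon$: a distribution with $d(T) > 1 - \epsilon_w$ does not have its support inside $T$, so the support $S_i$ need not be contained in $T$ at all. I would fix this by working with the \emph{$\epsilon_w$-heavy support} — the set of states carrying all but a $\sqrt{\epsilon_w}$ (or similar threshold) fraction of the mass — and showing that this heavy support, intersected with $T$, is what repeats; then translating a repetition of heavy supports into an exact limit-sure statement requires carefully choosing $\epsilon_w$ so that the "light" leftover mass stays below the threshold $\epsilon_e$ (resp. its nested analogue) needed to invoke Theorem~\ref{theo:epsilon-eventually} as a sufficient condition for limit-sure eventually synchronizing, exactly as the remark following that theorem anticipates. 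Getting these thresholds to compose across up to $2^n$ repetitions without blowing up is the heart of the calculation, and is where the precise form of $\epsilon_w$ is forced.
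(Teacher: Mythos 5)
Your high-level plan matches the paper's proof closely: use Lemma~\ref{lem:ls-weakly} to translate the failure of almost-sure weak synchronizing into the statement that no heavy subset $T'\subseteq T$ can both be limit-sure reachable from $d_0$ and limit-sure reach $\Pre(T')$ from itself, invoke Theorem~\ref{theo:epsilon-eventually} (in its contrapositive form, as a sufficient condition for limit-sure eventually synchronizing), and pigeonhole over the at most $2^n$ subsets of $T$. You also correctly identify the central technical obstacle — that a distribution with $d(T)>1-\epsilon_w$ is not supported inside $T$, so one must pass to a ``heavy'' subset of $T$ — and you correctly anticipate Remark~\ref{rem:epsilon-eventually-support}.

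Two quantitative points in your plan would not survive contact with the bookkeeping, and both matter. First, a \emph{fixed} threshold (you suggest $\sqrt{\epsilon_w}$) for defining the heavy subset cannot compose across the $2^n$ synchronization events. With a fixed threshold $z$, once a subset $T_j$ is forbidden at the level $1-z\cdot\alpha^{F+1}$, the next heavy subset has mass at least $1-\epsilon_w-(n-1)z$, and for it to exceed the forbidden level you would need $\epsilon_w + (n-1)z < z\alpha^{F+1}$, which is impossible since $\alpha^{F+1}<1<n-1$. The paper instead uses a \emph{geometrically decreasing} sequence of thresholds $z_k=\frac{\alpha_0}{n}\bigl(\frac{\alpha^{F+1}}{n}\bigr)^k$ with $F=(n+1)\cdot 2^n$: the $(k+1)$-th heavy subset is defined with threshold $z_{k+1}$ and has mass $>1-n z_{k+1}=1-z_k\alpha^{F+1}$, which exceeds all the earlier forbidden levels $1-z_j\alpha^{F+1}$ for $j\le k$ precisely because $z_k\le z_j$; this monotone decrease is what makes the distinctness argument work, not just a notational choice. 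Second, your explanation of the $\alpha^{(n+2)\cdot 4^n}$ factor as a ``nested'' application of Theorem~\ref{theo:epsilon-eventually} on the product $\M\times[r]$ misidentifies the source of the $4^n$. The product construction is already absorbed into the single bound $\epsilon_e=\alpha_0\cdot\alpha^{(n+1)\cdot 2^n}$ of Theorem~\ref{theo:epsilon-eventually}; the $4^n$ arises from compounding the factor $\alpha^{F+1}$ across the $2^n$ pigeonhole steps: $z_{2^n}=\frac{\alpha_0}{n}\cdot\bigl(\alpha^{F+1}/n\bigr)^{2^n}=\alpha_0\cdot\alpha^{(F+1)\cdot 2^n}/n^{2^n+1}$, and $(F+1)\cdot 2^n\le(n+2)\cdot 4^n$. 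You flag the threshold composition as ``the heart of the calculation''; it is, and it is exactly where your sketch as written would fail.
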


\begin{proof} 
%
Given the assumption of the lemma, 
we show the following statement by induction on $k = 0,1,\dots, 2^n$:
if there are $k$ distributions in $\M^{\straa}$ that 
are strictly $(1-\epsilon_w)$-synchronized in $T$, then there 
exist $k$ \emph{distinct} nonempty sets $T_1, \dots, T_k \subseteq T$
such that no distribution after those $k$ distributions 
in $\M^{\straa}$ is strictly $(1-\epsilon_w)$-synchronized in $T_j$ 
(for all $1 \leq j \leq k$). 

For $k = 2^{n}$, one of the sets $T_j$ is equal to $T$ and 
it follows that at most $2^{n}$ distributions in $\M^{\straa}$ can be 
$(1-\epsilon_w)$-synchronized in $T$, which concludes the base case.
For the proof by induction, we use the bound $\epsilon_e$ of 
Theorem~\ref{theo:epsilon-eventually}.
Let $F = (n+1)\cdot 2^{n}$ (thus $\epsilon_e = \alpha_0 \cdot \alpha^{F}$)
and for $k=0,1,\dots$ define $z_k = \frac{\alpha_0}{n} \cdot \left(\frac{\alpha^{F+1}}{n}\right)^k$.
We prove a slightly stronger statement:
for $k = 0,1,\dots, 2^n$, if there are $k$ positions $i_1< i_2 < \ldots < i_k$
such that the distributions $\M^{\straa}_{i_j}$ ($j=1,\dots,k$) 
are strictly $(1-\epsilon_w)$-synchronized in $T$, then there 
exist $k$ distinct nonempty sets $T_1, \dots, T_k \subseteq T$
such that no distribution after position $i_j$
in $\M^{\straa}$ is strictly $(1- z_j \cdot \alpha^{F+1})$-synchronized in $T_j$ 
(for all $1 \leq j \leq k$).

This statement is indeed stronger since the sequence $z_k$ is decreasing, and 
$\epsilon_w$ was chosen such that $\epsilon_w \leq z_{2^n}$, 
from which it follows that $1-\epsilon_w \geq 1-z_k$
for all $k \leq 2^n$. 

The base case for $k=0$ holds trivially. For the induction case,
assume that the statement holds for a given $k < 2^n$, and show that 
it holds for $k+1$ as follows. If there are $k+1$ positions $i_1< i_2 < \ldots < i_{k+1}$
such that all distributions $d_{j} = \M^{\straa}_{i_j}$ ($j=1,\dots,k+1$) 
are strictly $(1-\epsilon_w)$-synchronized in $T$,
then by the induction hypothesis, no distribution after position $i_j$
in $\M^{\straa}$ is strictly $(1- z_j \cdot \alpha^{F+1})$-synchronized in $T_j$ 
(for all $1 \leq j \leq k$).

Now consider the distribution $d_{k+1}$ at position $i_{k+1}$, 
which is $(1-\epsilon_w)$-synchronized in $T$
and appears after position $i_k$ in $\M^{\straa}$.
We construct the set 
$T_{k+1} = \{q \in T \cap \Supp(d_{k+1}) \mid d_{k+1}(q) > z_{k+1} \}$,
which contains the states in $T$ that carry enough probability mass (namely $z_{k+1}$)
according to $d_{k+1}$.

Note that not all states in $T \cap \Supp(d_{k+1})$ carry 
a probability mass less than $z_{k+1}$:
otherwise, the total mass of $T$ in $d_{k+1}$ would be at most 
$n \cdot z_{k+1} \leq 1-\epsilon_w$ (this inequality holds thanks to
$n\geq 2$), in contradiction with 
$d_{k+1}$ being $(1-\epsilon_w)$-synchronized in $T$.
Therefore $T_{k+1}$ is nonempty.
Hence the set $T_{k+1}$ can be obtained from $T$ by removing at most $n-1$ states
and we have
$$
\begin{cases}
d_{k+1}(T_{k+1}) > 1-\epsilon_w - (n-1)\cdot  z_{k+1} \geq 1 - n \cdot  z_{k+1} = 1 - z_{k}\cdot \alpha^{F+1} & \\
d_{k+1}(q) > z_{k+1} \text{ for all } q \in T_{k+1} & \\
\end{cases}
$$

So $d_{k+1}$ is strictly $(1-z_{k}\cdot \alpha^{F+1})$-synchronized in $T_{k+1}$,
and therefore also strictly $(1-z_{j}\cdot \alpha^{F+1})$-synchronized in $T_{k+1}$
(for all $1 \leq j \leq k$). 
Then, the induction hypothesis implies that the set $T_{k+1}$ is distinct 
from $T_1,\dots,T_{k}$.
Since $1-z_{k}\cdot \alpha^{F+1} \geq 1-z_{0}\cdot \alpha^{F+1} > 1 - \epsilon_e$, 
it follows that $d_{k+1} = \M^{\straa}_{i_{k+1}}$ is 
strictly $(1-\epsilon_e)$-synchronized in $T_{k+1}$,
and by Theorem~\ref{theo:epsilon-eventually}, that 
the initial distribution $d_0$
is limit-sure eventually synchronizing in $T_{k+1}$,
that is $d_0 \in \winlim{event}(T_{k+1})$.

By Lemma~\ref{lem:ls-weakly}, this entails that 
$T_{k+1}$ is not limit-sure eventually synchronizing in $\Pre(T_{k+1})$
(i.e., $T_{k+1} \not\in \winlim{event}(\Pre(T_{k+1}))$),
and by Theorem~\ref{theo:epsilon-eventually}, for all distributions $d$
in $\M^{\straa}$ that occur at or after position $i_{k+1}$, 
we have $d(\Pre(T_{k+1})) \leq 1- z_{k+1} \cdot \alpha^F$
where $z_{k+1} < \min \{d_{k+1}(q) \mid q \in T_{k+1} \cap \Supp(d_{k+1})\}$ 
is a lower bound on the smallest positive probability of a state 
of $T_{k+1}$ in the distribution $d_{k+1}$, taken as the initial distribution 
(see Remark~\ref{rem:epsilon-eventually-support}).
It follows that for all distributions $d$
in $\M^{\straa}$ that occur (strictly) after position $i_{k+1}$, 
we have $d(T_{k+1}) \leq 1- z_{k+1} \cdot \alpha^{F+1}$.
Hence no distribution in $\M^{\straa}$ after $d_{k+1}$
is strictly $(1- z_{k+1} \cdot \alpha^{F+1})$-synchronized, which 
concludes the proof of the induction case. \qed
\end{proof}

\section{Always and Strongly Synchronizing}\label{sec:always-strongly}

\begin{shortversion}
The anaysis of always and strongly synchronizing modes is relatively straightforward,
and we present the bounds in Theorem~\ref{theo:epsilon-always-strongly}.
\end{shortversion}

\begin{longversion}
We summarize the key insights for always and strongly synchronizing (i.e., 
synchronizing always from some point on),
and give the bounds in Theorem~\ref{theo:epsilon-always-strongly}.

\paragraph{Always synchronizing.}
For always synchronizing, all winning modes coincide $\winsure{always}(T) =
\winas{always}(T) = \winlim{always}(T)$ and it is easy to see that sure winning
for always synchronizing is equivalent to sure winning for the safety objective 
$\Box T$~\cite[Lemma~2]{DMS19}.
A standard analysis of MDPs shows that if a sure safety objective $\Box T$ is violated,
then for all \emph{pure} strategies $\straa$ there exists a path compatible with
$\straa$ from some state in the support of the initial distribution
to a state in $Q \setminus T$, of length at most $\abs{Q}$ and thus of probability 
at least $\alpha_0 \cdot \alpha^{n}$ where $\alpha_0$ is the smallest positive
probability in the initial distribution, and $\alpha$ is the smallest positive 
probability in the transitions of $\M$. Under a \emph{randomized} strategy $\straa$,
a probability mass $\alpha_0 \cdot \alpha^{n}$ also reaches $Q \setminus T$
within $n$ steps, but possibly along paths of different lengths.
By a pigeonhole principle, at some length (smaller than $n$)
the probability mass must be at least $\frac{\alpha_0}{n} \cdot \alpha^{n}$ in $Q \setminus T$.

\paragraph{Strongly synchronizing.}
For strongly synchronizing, the almost-sure and limit-sure winning modes 
coincide~\cite[Corollary~28]{DMS19}. On the other hand, the sure and almost-sure 
winning modes are equivalent to, respectively, sure, and almost-sure winning 
for the reachability objective~$\Diamond S$  
where $S$ is the sure-winning region for the safety objective~$\Box T$~\cite[Lemma~27]{DMS19}.
It follows that:

\begin{itemize}
\item if $d_0 \not\in \winsure{strongly}(T)$
is not sure winning for strongly synchronizing in~$T$, 
then for all \emph{pure} strategies $\straa$
there exists an infinite path $q_0 a_0 q_1 a_1 \dots$
compatible with $\straa$
from some state $q_0$ in the support of $d_0$
that remains outside $S$, that is $q_i \in Q \setminus S$
for all $i \geq 0$. The corresponding probability mass is at least
$\alpha_0 \cdot \alpha^{i}$, and hence for all \emph{randomized} strategies 
$\straa$, a probability mass at least $\alpha_0 \cdot \alpha^{i}$ is also 
in $Q \setminus S$ at step $i$ (but not necessarily in a single state). 

By definition of $S$, and using an argument similar to the case
of always synchronizing, for all $i \geq 0$,
there exists a length $\ell_i \leq n$ such that 
a probability mass at least $\alpha_0 \cdot \alpha^{i} \cdot \frac{\alpha^{n}}{n}$
is in $Q \setminus T$ at step $i+\ell_i$.
We construct the indices $i_0, i_1, \dots$ inductively as follows. 
Let $i_0 = \ell_0$ 
and for all $j \geq 0$ let $i_{j+1} = i_{j} + 1 + \ell_{i_{j}+1}$.
Then $\M^{\straa}_{i_j}(T) < 1$ is not $1$-synchronized in $T$
(in fact $\M^{\straa}_{i_j}(T) \leq 1 - \alpha_0 \cdot \alpha^{i_{j-1}+n}$)
for all $j\geq 0$.
Note that $i_0 \leq n$ and $0 < i_{j+1} - i_{j} \leq n$ for all $j\geq 0$.

\item 
if $d_0 \not\in \winas{strongly}(T)$, then
$d_0$ is not almost-sure winning for the reachability objective $\Diamond S$,
and by Lemma~\ref{lem:almost-sure-reach-MDP} 
for all (randomized) strategies $\straa$ and all $i \geq 0$,
there exists a state $q_i \in Q \setminus S$ such that 
$\Pr^{\straa}_{d_0}(\Diamond^{i} \{q_i\}) \geq \alpha_0 \cdot \frac{\alpha^{n}}{n}$.
Since $q_i \not\in S$ is not in the winning region for sure safety $\Box T$,
there is a length $\ell_i \leq n$ as above, 
such that the probability mass in $Q \setminus T$ at length $i+\ell_i$
is at least $\alpha_0 \cdot \frac{\alpha^{2n}}{n^2}$.
For the indices $i_0 = \ell_0$ and $i_{j+1} = i_{j} + 1 + \ell_{i_{j}+1}$
for all $j \geq 0$, we get $\M^{\straa}_{i_j}(T) \leq 1 - \alpha_0 \cdot \frac{\alpha^{2n}}{n^2}$. 
\end{itemize}
\end{longversion}

\begin{theorem}\label{theo:epsilon-always-strongly}
Given an initial distribution $d_0$,
let $\alpha_0$ be the smallest positive probability in~$d_0$,
and let $\epsilon_a = \alpha_0 \cdot \frac{\alpha^{n}}{n}$ and $\epsilon_s = \alpha_0 \cdot \frac{\alpha^{2n}}{n^2}$.

\begin{itemize}
\item 
if $d_0 \not\in \winsure{always}(T)$ is not sure always synchronizing in~$T$, 
then for all strategies $\straa$, in the sequence $\M^{\straa}$ 
there exists a position $i \leq n$ such that $\M^{\straa}_{i}$ is not 
$(1-\epsilon_a)$-synchronized in $T$,

\item 
if $d_0 \not\in \winsure{strongly}(T)$ is not sure strongly synchronizing in~$T$, 
then for all strategies $\straa$, in the sequence $\M^{\straa}$ 
there exist infinitely many positions $i_0 < i_1 < i_2 < \dots$ 
where $i_0 \leq n$ and $i_{j+1} - i_{j} \leq n$ for all $j\geq 0$
such that $\M^{\straa}_{i_j}$ is not $1$-synchronized in $T$.

\item if $d_0 \not\in \winas{strongly}(T)$ is not almost-sure strongly synchronizing in~$T$, 
then for all strategies $\straa$, in the sequence $\M^{\straa}$ 
there exist infinitely many positions $i_0 < i_1 < i_2 < \dots$ 
where $i_0 \leq n$ and $i_{j+1} - i_{j} \leq n$ for all $j\geq 0$
such that $\M^{\straa}_{i_j}$ is not $(1-\epsilon_s)$-synchronized in $T$.
\end{itemize}
\end{theorem}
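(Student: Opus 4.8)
\textbf{Proof proposal for Theorem~\ref{theo:epsilon-always-strongly}.}
The plan is to treat the three items separately, each reducing to a classical MDP fact plus a pigeonhole argument that converts a lower bound on probability mass escaping a target region into a lower bound concentrated at \emph{one} length.

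\emph{Always synchronizing.} I would start from the known equivalence $\winsure{always}(T) = \winas{always}(T) = \winlim{always}(T)$, and the fact that sure always synchronizing is sure winning for the safety objective $\Box T$~\cite[Lemma~2]{DMS19}. If $d_0 \not\in \winsure{always}(T)$, there is a state $q_0 \in \Supp(d_0)$ not in the sure-safety winning region, hence under any pure strategy some compatible path from $q_0$ leaves $T$ within $n$ steps, carrying probability at least $\alpha_0 \cdot \alpha^{n}$. For a randomized strategy the same total mass $\alpha_0 \cdot \alpha^{n}$ reaches $Q \setminus T$ within $n$ steps, but possibly spread across several lengths $1,\dots,n$; by pigeonhole, at some single length $i \leq n$ the mass in $Q \setminus T$ is at least $\frac{\alpha_0 \cdot \alpha^{n}}{n} = \epsilon_a$, so $\M^{\straa}_i$ is not $(1-\epsilon_a)$-synchronized.

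\emph{Strongly synchronizing (sure and almost-sure).} I would invoke~\cite[Lemma~27]{DMS19}: sure (resp.\ almost-sure) strongly synchronizing in $T$ equals sure (resp.\ almost-sure) winning for $\Diamond S$, where $S$ is the sure-safety winning region for $\Box T$. For the sure case, if $d_0 \not\in \winsure{strongly}(T)$ then under any pure strategy there is an infinite path from some $q_0 \in \Supp(d_0)$ staying forever in $Q \setminus S$; hence for every $i$ a mass $\geq \alpha_0 \cdot \alpha^{i}$ sits in $Q \setminus S$ at step $i$, and from any state in $Q \setminus S$ (which is not sure-safe) a path escaping $T$ within $\ell_i \leq n$ steps exists, giving mass $\geq \alpha_0 \cdot \alpha^{i} \cdot \alpha^{n}/n$ in $Q \setminus T$ at some step $i+\ell_i$ (pigeonhole over the $\ell_i \leq n$ possible lengths). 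I would then build indices $i_0 = \ell_0$ and $i_{j+1} = i_j + 1 + \ell_{i_j+1}$, so $i_0 \leq n$ and $0 < i_{j+1}-i_j \leq n$, and at each $i_j$ the distribution is not $1$-synchronized. For the almost-sure case, $d_0$ not being almost-sure winning for $\Diamond S$ lets me apply Lemma~\ref{lem:almost-sure-reach-MDP}: for every $i$ there is $q_i \in Q \setminus S$ with $\Pr^{\straa}_{d_0}(\Diamond^{i}\{q_i\}) \geq \alpha_0 \cdot \alpha^{n}$; combining with the at-most-$n$-step escape from $T$ and the same pigeonhole yields mass $\geq \alpha_0 \cdot \alpha^{2n}/n^2 = \epsilon_s$ in $Q \setminus T$ at some step $i+\ell_i$, and the same index construction gives the infinitely many witness positions with spacing at most $n$. (Actually, since the escape bound from $Q \setminus S$ via $\ell_i \leq n$ steps already costs one pigeonhole factor $n$, and the almost-sure reachability bound of Lemma~\ref{lem:almost-sure-reach-MDP} is $\alpha_0\cdot\alpha^n$, I should double-check the exact exponent and whether a second pigeonhole division by $n$ is needed; the statement's $\alpha^{2n}/n^2$ suggests two factors of $\alpha^n$ and two factors of $1/n$, consistent with one pigeonhole for reaching a single $q_i$ and one for the escape length.)

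\emph{Main obstacle.} The routine part is the pigeonhole bookkeeping and the index recurrence; the delicate point is making the bounds \emph{uniform in $i$}. In the sure-strongly case the per-step mass in $Q\setminus S$ is only $\alpha_0\cdot\alpha^i$, which decays, so it can witness non-$1$-synchronization (for which any positive slack suffices) but cannot give a fixed $\epsilon$ — that is precisely why the sure-strongly item claims only "not $1$-synchronized" rather than "not $(1-\epsilon)$-synchronized". In the almost-sure case the key leverage is that Lemma~\ref{lem:almost-sure-reach-MDP} gives a bound \emph{independent of $i$}, and I must make sure the subsequent $\leq n$-step escape from $T$ (which only multiplies by $\alpha^n/n$, again independent of $i$) preserves this, so that $\epsilon_s$ is a genuine constant. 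I should also verify that maximality of end-components or Lemma~\ref{lem:ec} is not actually needed here — the argument is purely about escape within $n$ steps and does not require the asymptotic end-component structure.
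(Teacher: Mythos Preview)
Your proposal is correct and follows essentially the same approach as the paper: the reduction of always-synchronizing to sure safety $\Box T$, the reduction of strongly-synchronizing to (sure, resp.\ almost-sure) reachability of the sure-safe region $S$, the pigeonhole over lengths to concentrate escaping mass at a single step, and the index recurrence $i_0=\ell_0$, $i_{j+1}=i_j+1+\ell_{i_j+1}$ all match the paper exactly, as does your observation that the sure-strongly case only yields ``not $1$-synchronized'' because the bound $\alpha_0\alpha^i$ decays, while Lemma~\ref{lem:almost-sure-reach-MDP} supplies the $i$-independent bound in the almost-sure case. One small correction: Lemma~\ref{lem:almost-sure-reach-MDP} gives total mass $\alpha_0\alpha^n$ in $Q\setminus S$, not in a single state, so the single-state bound is $\alpha_0\alpha^n/n$ after a pigeonhole over states --- which is exactly what your parenthetical self-check anticipates and what the paper uses to arrive at $\epsilon_s=\alpha_0\alpha^{2n}/n^2$.
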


\section{Adversarial Synchronizing Objectives}
In an adversarial MDP the strategies are universally quantified,
which corresponds to satisfying an objective regardless
of the choice of strategies by an adversary.
Replacing $\exists \straa$ by $\forall \straa$ in the definition
of the three winning modes gives, after taking the negation
to get existentially quantified strategies, 
the following new winning modes.

Given a set $T \subseteq Q$,
we say that a sequence $d_0 d_1 \dots$ of probability distributions is:
\begin{itemize}
\item \emph{positively} \{always, eventually, weakly, strongly\} winning 
if $d_i(T) > 0$ for, respectively, all $i \geq 0$, some $i \geq 0$, infinitely many $i$'s,
all but finitely many $i$'s.

\item \emph{boundedly}  \{always, eventually, weakly, strongly\} winning 
if there exists $\epsilon>0$ 
such that $d_i(T) > \epsilon$ for, respectively, all $i \geq 0$, some $i \geq 0$, infinitely many $i$'s,
all but finitely many $i$'s.
\end{itemize}

For $\lambda \in \{always, event, weakly, strongly\}$, 
we denote by $\winpos{\lambda}(T)$ (resp., $\winbound{\lambda}(T)$)
the set of initial distributions $d_0$ from which 
there exists a strategy $\straa$ such that
the sequence $\M^{\straa}$ is positively (resp., boundedly) $\lambda$-synchronizing in $T$,
and we say that $\straa$ is positively (resp., boundedly) $\lambda$-synchronizing in $T$
from $d_0$. 


Table~\ref{tab:def-modes} summarizes the new definitions.
Note that replacing the existential quantification on strategies
in boundedly winning mode by a supremum gives the same question, 
since $\exists \straa: f(\straa) > 0$ is equivalent to $\sup_\straa f(\straa) > 0$. 
For the same reason, we have the identity $\winpos{event}(T) = \winbound{event}(T)$.
It is easy to show that the definitions imply the identity 
$\winbound{always}(T) = \winpos{always}(T) \cap \winbound{strongly}(T)$,
which we also obtain as a corollary of Lemma~\ref{lem:pos-bounded} below.

\begin{remark}\label{rem:expressiveness}
It immediately follows from the definitions that 
for all synchronizing modes $\lambda \in \{always, event, weakly, strongly\}$, 
and $\mu \in \{positive, bounded\}$:
\begin{itemize}
\item $\win{always}{\mu}(T) \subseteq \win{strongly}{\mu}(T) \subseteq \win{weakly}{\mu}(T) \subseteq \win{event}{\mu}(T)$,
\item $\winbound{\lambda}(T) \subseteq \winpos{\lambda}(T)$, 
\end{itemize}
and moreover,
\begin{itemize}
\item $\winpos{event}(T) = \winbound{event}(T)$, and
\item $\winbound{always}(T) = \winpos{always}(T) \cap \winbound{strongly}(T)$.
\end{itemize}
\end{remark}

It is easy to see that if there exists a strategy $\straa$ that 
is positively $\lambda$-synchronizing in $T$, then the
strategy $\straa_{\u}$ that plays at every round all actions uniformly 
at random is also positively $\lambda$-synchronizing in $T$,
because the condition $d_i(T) > 0$ is equivalent to 
$\Supp(d_i) \cap T \neq \emptyset$, and because we have $\straa \subseteq \straa_{\u}$,
which implies that 
$\Supp(\M_i^{\straa}) \subseteq \Supp(\M_i^{\straa_{\u}})$
for all $i \geq 0$.

Hence, in all four synchronization modes, 
the question is equivalent to the same question in Markov chains
(obtained from the given MDP by fixing the strategy $\straa_{\u}$)
which can be solved as follows.
Given a Markov chain, consider the underlying directed graph $\tuple{Q,E}$
where $(q,q') \in E$ if $\delta(q,a)(q') > 0$ (where $\Act = \{a\}$).
For positive eventually synchronizing, it suffices to find a state in $T$ 
that is reachable in that graph, and for positive weakly synchronizing,
it suffices to find a state in $T$ that is both reachable and can reach itself.
These questions are NL-complete.
For positive always and strongly synchronizing, the question is equivalent
to the model-checking problem for the formulas $G \exists T$ 
and $FG \exists T$ in the logic CTL+Sync, which are both 
coNP-complete~\cite[Lemma~2 \& Section~3]{CD16}.

For boundedly winning, we show that one strategy is good enough
in all four synchronization modes, like for positive winning. 
The strategy plays like $\straa_{\u}$ for the first $2^n$ rounds,
and then switches to a strategy $\straa_{\Epsilon}$ that, 
in the states $q \in \Epsilon$, plays uniformly at random 
all actions that stay in the end-component $\Epsilon(q)$ of $q$
(thus all actions in $\Act_{\Epsilon(q)}$),
and in the transient states $q \not\in \Epsilon$, plays 
all actions uniformly at random. We call this strategy
the \emph{freezing} strategy. Intuitively we use $\straa_{\u}$
to scatter the probability mass in all end-components of the MDP,
and then $\straa_{\Epsilon}$ to maintain a bounded probability
in each end-component.


\begin{lemma}\label{lem:pos-bounded}
Let $\M$ be an MDP with $n$ states and initial distribution $d_0$, and let $T$ be a target set. 
Consider the following conditions:
\begin{itemize}
\item[(1)] $\forall i \geq 0: \M^{\straa_{\u}}_i(T) > 0$  
\quad \quad (2) $\forall i \geq 2^n: \M^{\straa_{\u}}_i(\Epsilon \cap T) > 0$
\end{itemize}
Then, the following equivalences hold:

\begin{itemize}
\item[(a)] $d_0 \in \winpos{always}(T)$ if and only if Condition~$(1)$ holds;  
\item[(b)] $d_0 \in \winbound{strongly}(T)$ if and only if Condition~$(2)$ holds;  
\item[(c)] $d_0 \in \winbound{always}(T)$ if and only if Conditions~$(1)$ and~$(2)$ hold;  
\end{itemize}
\end{lemma}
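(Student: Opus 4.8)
The plan is to treat the three items with a shared toolkit and to concentrate the real work in the $\Leftarrow$ direction of~(b). Three facts do the routine lifting: (i)~every strategy $\straa$ satisfies $\straa\subseteq\straa_{\u}$, hence $\Supp(\M^{\straa}_i)\subseteq\Supp(\M^{\straa_{\u}}_i)$ for all $i$, so any ``positive mass in $T$'' statement transfers from an arbitrary witness strategy to $\straa_{\u}$; (ii)~Lemma~\ref{lem:ec} gives $\liminf_i\M^{\straa}_i(\Epsilon)=1$ for every $\straa$; (iii)~the sequence $S_i:=\Supp(\M^{\straa_{\u}}_i)$ is the orbit of $\Supp(d_0)$ under the one-step reachability operator on $2^Q$, hence $(S_i)_i$, and therefore $(S_i\cap\Epsilon\cap T)_i$, is eventually periodic with preperiod $<2^n$. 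Item~(a) is then immediate: if Condition~$(1)$ holds, $\straa_{\u}$ is itself a positive-always witness; conversely, any positive-always witness $\straa$ has $\Supp(\M^{\straa}_i)\cap T\neq\emptyset$ for all $i$, so by~(i) $\M^{\straa_{\u}}_i(T)>0$ for all $i$.

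For the $\Rightarrow$ direction of~(b): if $\straa$ witnesses $d_0\in\winbound{strongly}(T)$ with $\liminf_i\M^{\straa}_i(T)>\epsilon>0$, then $\M^{\straa}_i(T)>\epsilon$ and, by~(ii), $\M^{\straa}_i(\Epsilon)>1-\epsilon$ for all large $i$; adding these, $\M^{\straa}_i(\Epsilon\cap T)>0$ for all large $i$, so by~(i) $S_i\cap\Epsilon\cap T\neq\emptyset$ for all large $i$. Since $(S_i\cap\Epsilon\cap T)_i$ is eventually periodic with preperiod $<2^n$, nonemptiness for all large $i$ upgrades to nonemptiness for every $i\geq 2^n$, i.e.\ Condition~$(2)$. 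The $\Rightarrow$ direction of~(c) follows because $\winbound{always}(T)\subseteq\winpos{always}(T)\cap\winbound{strongly}(T)$ holds by definition and thus reduces to the $\Rightarrow$ directions of~(a) and~(b).

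The core is the $\Leftarrow$ direction of~(b). Assume Condition~$(2)$ and run the freezing strategy $\straa_{\Epsilon}$. After the $2^n$ scattering rounds we sit at $d^{\star}:=\M^{\straa_{\u}}_{2^n}=\M^{\straa_{\Epsilon}}_{2^n}$, and from then on $\Epsilon$, and each maximal end-component $U$, is closed under $\straa_{\Epsilon}$, so the transient mass decays to $0$ and $\M^{\straa_{\Epsilon}}_{2^n+j}(U)\geq d^{\star}(U)$ for all $j\geq 0$. Two structural points finish the argument. First, $2^n$ scattering rounds make reachability saturate: once any state of a strongly connected end-component $U$ is reached, its cyclic structure makes some state of $U$ reachable at all sufficiently large steps, and combined with the periodicity of $(S_i)_i$ (preperiod $<2^n$) this forces $U$ to be hit at every step $i\geq 2^n$; in particular every reachable $U$ has $d^{\star}(U)>0$, and the classes of $U$ that are reachable advance cyclically with $i$. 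Second, to sidestep the periodicity of end-components, one passes to the time-counter MDP $\M\times[r]$ with $r$ a common multiple of the end-component periods: there all end-components are aperiodic, and under the lifted freezing strategy the mass frozen inside (a lifted copy of) $U$ converges to a stationary distribution that is at least some $\beta_U>0$ on each of its states. To conclude, fix $i\geq 2^n$: Condition~$(2)$ produces $q\in S_i\cap\Epsilon\cap T$; its end-component $U=\Epsilon(q)$ is reachable, hence carries frozen mass $\geq d^{\star}(U)>0$ from step $2^n$ on, and by the previous points this mass sits, at step $i$, on the reachable class containing $q$, with weight $>0$ on $q$ itself (and $\geq\beta_U$ for all large $i$). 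Hence $\M^{\straa_{\Epsilon}}_i(T)\geq\M^{\straa_{\Epsilon}}_i(\{q\})>0$ for all $i\geq 2^n$, and $\geq\beta:=\min_U\beta_U>0$ for all large $i$; taking the minimum with the finitely many positive values in the initial window gives $\M^{\straa_{\Epsilon}}_i(T)\geq\beta'>0$ for all $i\geq 2^n$, so $\liminf_i\M^{\straa_{\Epsilon}}_i(T)>0$, and one can read off an explicit $\beta'=\alpha_0\cdot\alpha^{2^{O(n)}}$.

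Finally, the $\Leftarrow$ direction of~(c): assume Conditions~$(1)$ and~$(2)$ and use $\straa_{\Epsilon}$. For $i<2^n$, $\M^{\straa_{\Epsilon}}_i(T)=\M^{\straa_{\u}}_i(T)>0$ by Condition~$(1)$; for $i\geq 2^n$, $\M^{\straa_{\Epsilon}}_i(T)\geq\beta'>0$ by the previous paragraph; the minimum of these finitely many values and $\beta'$ is a single $\epsilon>0$ witnessing $d_0\in\winbound{always}(T)$, which also yields the identity of Remark~\ref{rem:expressiveness}. I expect the main obstacle to be exactly this bundle of estimates in the $\Leftarrow$ direction of~(b): matching Condition~$(2)$ for \emph{every} $i\geq 2^n$ against the periodic and reachability structure of the end-components so that the freezing strategy always keeps positive (then bounded-below) mass on a target state, and verifying that the $2^n$ runway is long enough for all the relevant reachability sets to saturate; the remaining bookkeeping is routine but must keep the constants compatible with the companion results.
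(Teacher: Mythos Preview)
Your proposal is correct and follows essentially the same approach as the paper: the uniform strategy $\straa_{\u}$ for positivity, Lemma~\ref{lem:ec} for the $\Rightarrow$ direction of~(b), and the freezing strategy (which the paper denotes $\straa_f$, reserving $\straa_{\Epsilon}$ for the post-$2^n$ phase) for the $\Leftarrow$ direction. The paper argues the $\Rightarrow$ direction of~(b) by contrapositive rather than directly, and establishes the crucial support equality $\Supp(\M^{\straa_{\u}}_i) \cap \Epsilon = \Supp(\M^{\straa_f}_i) \cap \Epsilon$ for $i\geq 2^n$ (your ``this mass sits, at step $i$, on the reachable class containing $q$'') by a careful periodicity argument on the sequence $(\Supp(\M^{\straa_{\u}}_i))_i$ rather than by passing to the time-counter MDP.
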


\begin{proof}
Equivalence~$(a)$ follows from the definition of positive always synchronizing,
and from the fact that the uniform strategy $\straa_{\u}$ is sufficient for positive 
winning.

We show Equivalence~$(b)$ as follows. 
First, if Condition~$(2)$ does not hold, then 
$\M^{\straa_{\u}}_i(\Epsilon \cap T) = 0$ for some $i \geq 2^n$,
and thus also for infinitely many $i$'s (since the sequence $\Supp(\M^{\straa_{\u}}_i)$
is ultimately periodic, after at most $2^n$ steps). 
For arbitrary strategy $\straa$, we have $\Supp(\M^{\straa}_i) \subseteq \Supp(\M^{\straa_{\u}}_i)$
for all $i \geq 0$, therefore $\M^{\straa}_i(\Epsilon \cap T) = 0$ for 
infinitely many $i$'s.
By Lemma~\ref{lem:ec}, we have $\liminf_{i \to \infty} \M^{\straa}_{i}(\Epsilon) = 1$
which entails that $\limsup \M^{\straa}_{i}(\Epsilon \setminus T) = 1$
and $\limsup \M^{\straa}_{i}(Q \setminus T) = 1$, 
that is $\liminf \M^{\straa}_{i}(T) = 0$.
Since this holds for arbitrary strategy $\straa$, 
it follows that $d_0 \not\in \winbound{strongly}(T)$.

For the converse direction, assuming Condition~$(2)$ holds, we show that 
$d_0 \in \winbound{strongly}(T)$, witnessed by the freezing strategy~$\straa_f$
(which plays like $\straa_{\u}$ for the first $2^n$ rounds,
and then switches to the strategy $\straa_{\Epsilon}$).

We show the key property that 
$$\Supp(\M^{\straa_{\u}}_i) \cap \Epsilon = \Supp(\M^{\straa_f}_i) \cap \Epsilon 
\text{ for all } i \geq 2^n.$$

Fix an arbitrary $i \geq 2^n$ and let $p$ be a period of the sequence 
$\Supp(\M^{\straa_{\u}})$ such that $i-p \leq 2^n$ and
$\Supp(\M^{\straa_{\u}}_i) = \Supp(\M^{\straa_{\u}}_{i-p})$.
Consider the Markov chain $\M_{\Epsilon}$ obtained by 
fixing the strategy $\straa_{\Epsilon}$ in~$\M$. 
%
From the basic theory of Markov chains, each
end-component~$C$ in $\M$ is a recurrent class in $\M_{\Epsilon}$.
For each $i \geq 2^n$, either all or none of the states in a 
periodic class of~$C$ are in the support of $\M^{\straa_{\u}}_i$. 

To show the key property, first consider a state $q \in \Supp(\M^{\straa_{\u}}_i) \cap \Epsilon$
for $i \geq 2^n$,
and show that $q \in \Supp(\M^{\straa_f}_i) \cap \Epsilon$.
Let $S$ be the periodic class of $\Epsilon(q)$ containing $q$ (in $\M_{\Epsilon}$), and thus 
$$S \subseteq \Supp(\M^{\straa_{\u}}_i) \cap \Epsilon, \text{ and thus }
S \subseteq \Supp(\M^{\straa_{\u}}_{i-p}) \cap \Epsilon.$$
Since $\straa_{\u}$ and $\straa_f$ coincide on the first $2^n$ rounds, 
we have $$S \subseteq \Supp(\M^{\straa_{f}}_{i-p}) \cap \Epsilon.$$

Now consider the strategy $\straa_{\Epsilon}$ and an initial distribution
with support $S$, and denote by $S+j$ the support of the probability
distribution after playing $\straa_{\Epsilon}$ for $j$ steps. Then,
since $\straa_{\Epsilon} \subseteq \straa_{f} \subseteq \straa_{u}$, 
$$S+p \subseteq \Supp(\M^{\straa_{f}}_{i}) \cap \Epsilon, \text{ and }
S+p \subseteq \Supp(\M^{\straa_{u}}_{i}) \cap \Epsilon.$$

We can repeat the same argument with $S+p$ instead of $S$, 
and show by induction that $S+ j \cdot p \subseteq \Supp(\M^{\straa_{f}}_{i}) \cap \Epsilon$
for all $j \geq 1$. In particular, by taking $j$ the period of the 
end-component containing $q$, we get $S+ j \cdot p = S$ and thus
$S \subseteq \Supp(\M^{\straa_{f}}_{i}) \cap \Epsilon$, which establishes
one direction of the key property (the converse direction follows from 
$\straa_f \subseteq \straa_{\u}$).

From the theory of Markov chains, in every end-component state $q \in \Epsilon$,
the positive probability mass is bounded away from~$0$ in $\M^{\Epsilon}$, 
that is there exists a bound $\epsilon > 0$ such that 
for all $i \geq 2^n$, for all $q \in \Epsilon$, if $\M^{\straa_{f}}_i(q) \neq 0$,
then $\M^{\straa_{f}}_i(q) \geq \epsilon$. 
By the key property and Condition~$(2)$,
for all $i \geq 2^n$, there exists $q \in \Epsilon \cap T$ such that
$\M^{\straa_{f}}_i(q) \neq 0$, which implies that 
$\liminf_{i \to \infty} M^{\straa_{f}}_i(T) \geq \epsilon > 0$ 
and thus $d_0 \in \winbound{strongly}(T)$.

Finally, the proof for Equivalence~$(c)$ follows the same steps as above to
show that Conditions~$(1)$ and~$(2)$ imply $d_0 \in \winbound{always}(T)$,
where Condition~$(1)$ is used to bound $M^{\straa_{f}}_i(T)$ for the first
$2^n$ rounds, and thus to ensure that $M^{\straa_{f}}_i(T) \geq B > 0$
for all $i \geq 0$, hence $d_0 \in \winbound{always}(T)$.
The converse direction immediately follows from the first part
of Remark~\ref{rem:expressiveness} and Equivalences~$(a)$~and~$(b)$. \qed
\end{proof}

We extract explicit bounds from the proof of Lemma~\ref{lem:pos-bounded}.
All end-components are reached within a most $n$ steps (under $\straa_{\u}$),
and further all states in (a periodic class of) a recurrent class
are reached (synchronously) within a most $n^2$ steps~\cite[Theorem~4.2.11]{Gallager},
thus all states in the periodic class have probability mass at least 
$\epsilon_a = \alpha_0 \cdot \left(\frac{\alpha}{\abs{\Act}}\right)^{n+n^2}$
where $\alpha_0$ is the smallest positive probability in the initial
distribution~$d_0$ (note that $\alpha/\abs{\Act}$ is the smallest
probability in the Markov chain $\M_{\Epsilon}$). 
It follows that the freezing strategy ensures
probability at least $\epsilon_a$ in $T$ at every step (if $\M$ is boundedly
always synchronizing), and probability at least $\epsilon_a$ in $T$ at 
every step after $N = n + n^2$.

The conditions~$(1)$ and~$(2)$ in Lemma~\ref{lem:pos-bounded} can be decided 
in coNP as follows. For Condition~$(1)$ we guess an index $i \leq 2^n$ (in 
binary) and compute the $i$-th power of the Boolean transition matrix 
$M \in \{0,1\}^{n^2}$ where $M(q,q') = 1$ if there is a transition from 
state~$q$ to state~$q'$ in the Markov chain obtained from the given MDP $\M$
by fixing the strategy $\straa_{\u}$. The matrix $M^i$ 
can be computed in polynomial time by successive squaring of $M$.
Then it suffices to check whether $M^i(q_0,q) = 0$ for all $q_0 \in \Supp(d_0)$
and $q \in T$. For Condition~$(2)$, since the sequence $\Supp(\M^{\straa_{\u}}_i)$
is ultimately periodic, we guess two indices $i,p \leq 2^n$ ($p \geq 1$) and check 
that $\Supp(\M^{\straa_{\u}}_i) = \Supp(\M^{\straa_{\u}}_{i+p})$ and
$\Supp(\M^{\straa_{\u}}_i) \cap \Epsilon \cap T = \emptyset$, using 
the same approach by successive squaring. Note that the union $\Epsilon$ 
of all end-components can be computed in polynomial time~\cite{CY95,deAlfaro97}.

Conditions~$(1)$ and~$(2)$ are also coNP-hard, using the same reduction 
that established coNP-hardness of the positive always and positive bounded
synchronizing~\cite[Lemma~2 \& Section~3]{CD16}, in which positive and bounded
winning mode coincide.
It follows that the membership problem for bounded always and bounded strongly
synchronizing is coNP-complete.

We now show the solution for bounded weakly synchronizing.
It suffices to find a state in $T \cap \Epsilon$ that is reachable 
in the underlying graph of the Markov chain $\M_{\straa_{\u}}$,
which is a NL-complete problem (like for positive weakly synchronizing,
except we require a reachable state in $T \cap \Epsilon$, not just in $T$).
Indeed, if all reachable end-components are contained in $Q \setminus T$,
then by Lemma~\ref{lem:ec} we have 
$\liminf_{i \to \infty} \M^{\straa}_{i}(Q \setminus T) = 1$,
that is $\limsup_{i \to \infty} \M^{\straa}_{i}(T) = 0$.
For the converse direction, if a state $\hat{q} \in T \cap \Epsilon$
is reachable, then by a similar argument as above
based on the theory of Markov chains, as the probability
mass in the states of the periodic classes (that contain some
probability mass) is bounded away from~$0$ in $\M^{\straa_{\Epsilon}}$,
it follows that within every $p$ steps, where $p$ is the period of the recurrent
class $\Epsilon(\hat{q})$ the probability mass in $\hat{q}$ is at least 
$\alpha_0 \cdot (\alpha / \abs{\Act})^{n+n^2}$.
Therefore, $\M$ is boundedly weakly synchronizing in $T$.
For the sake of completeness, note that for eventually synchronizing MDPs, 
the probability mass $\epsilon_e = \alpha_0 \cdot (\alpha/ \abs{\Act})^{n}$
in $T$ can be ensured within $n$ steps (using~$\straa_{\u}$).

\begin{theorem}\label{theo:pos-bound-complexity}
The complexity of the membership problem for positive and bounded 
synchronizing objectives is summarized in Table~\ref{tab:complexity}. 
\end{theorem}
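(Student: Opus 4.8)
The plan is to assemble Theorem~\ref{theo:pos-bound-complexity} from the pieces already developed in the preceding paragraphs, treating each cell of Table~\ref{tab:complexity} separately and pairing an algorithm (for the upper bound) with a reduction (for the lower bound). First I would handle \emph{positive} winning uniformly: since playing $\straa_{\u}$ is sufficient in all four modes, the question reduces to a graph problem on the Markov chain $\M_{\straa_{\u}}$. For positive eventually (equivalently bounded eventually), the task is reachability of a state in $T$; for positive weakly, reachability of a state in $T$ that lies on a cycle; both are solvable in $\mathrm{NL}$ by nondeterministic graph search, and $\mathrm{NL}$-hardness follows by the standard reduction from $st$-reachability (with the target state placed in $T$). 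For positive always and positive strongly, I would invoke the correspondence with model checking the $\mathrm{CTL}+\mathrm{Sync}$ formulas $G\exists T$ and $FG\exists T$, whose complexity is $\mathrm{coNP}$-complete by~\cite[Lemma~2~\&~Section~3]{CD16}; this gives both the $\mathrm{coNP}$ upper bound and the $\mathrm{coNP}$-hardness.

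Next I would treat \emph{bounded} winning. For bounded always and bounded strongly, the upper bound is Lemma~\ref{lem:pos-bounded}: Conditions~$(1)$ and~$(2)$ characterize membership, and the paragraph following that lemma explains how each condition is decided in $\mathrm{coNP}$ by guessing logarithmically many binary indices $i,p \le 2^n$ and verifying them by repeated squaring of the Boolean transition matrix of $\M_{\straa_{\u}}$, together with the polynomial-time computation of the end-component set $\Epsilon$. For $\mathrm{coNP}$-hardness, I would reuse the reduction of~\cite{CD16} that already establishes hardness of positive always and positive bounded synchronizing, noting that in that instance the positive and bounded modes coincide, so the same reduction transfers. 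For bounded weakly, the characterization is: some state of $T \cap \Epsilon$ is reachable in $\M_{\straa_{\u}}$; the ``only if'' direction uses Lemma~\ref{lem:ec} (if all reachable end-components avoid $T$ then $\limsup \M^{\straa}_i(T)=0$ for every $\straa$, not just $\straa_{\u}$, since supports only shrink under other strategies), and the ``if'' direction uses the basic Markov-chain fact that positive probability mass in a recurrent class is bounded below, so the freezing strategy $\straa_f$ keeps mass $\ge \alpha_0 \cdot (\alpha/\abs{\Act})^{n+n^2}$ in that state every $p$ steps. This is an $\mathrm{NL}$ reachability check, with $\mathrm{NL}$-hardness again by $st$-reachability.

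Finally I would note the one subtlety that makes the table internally consistent: bounded eventually coincides with positive eventually (since $\exists \epsilon>0: d_i(T)>\epsilon$ over a single witnessing index $i$ is the same as $d_i(T)>0$), so the ``Eventually'' column has a single entry, $\mathrm{NL}$-complete, obtained from the positive-eventually analysis above. Putting the eight membership problems together yields exactly the entries of Table~\ref{tab:complexity}.

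The main obstacle is not any single argument but making sure the $\mathrm{coNP}$ upper bounds for bounded always and bounded strongly are genuinely carried by Lemma~\ref{lem:pos-bounded} and the succinct-index-guessing procedure, rather than by the naive (and wrong) idea of guessing a winning strategy: the $\mathrm{coNP}$ lower bound holds already for Markov chains, so the certificate must be a \emph{counterexample} index pair $(i,p)$ witnessing the failure of Condition~$(1)$ or~$(2)$, verifiable in polynomial time by matrix squaring. I expect the write-up to spend most of its care there; the $\mathrm{NL}$ cases and all the hardness reductions are routine once the characterizations are in hand.
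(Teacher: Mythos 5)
Your proposal follows the paper's own proof essentially line for line: uniform strategy $\straa_{\u}$ reduces positive winning to graph reachability/cycle problems on $\M_{\straa_{\u}}$ (NL-complete for eventually and weakly, coNP-complete for always and strongly via the CTL+Sync correspondence with~\cite{CD16}); bounded always and strongly are decided via Conditions~(1) and~(2) of Lemma~\ref{lem:pos-bounded} with a coNP certificate given by binary indices $i,p \le 2^n$ verified by matrix squaring, and hardness inherited from the same~\cite{CD16} reduction; bounded weakly is reachability of $T\cap\Epsilon$ via Lemma~\ref{lem:ec} and the freezing strategy; bounded eventually collapses to positive eventually. Your closing remark on why the coNP upper bound must come from a counterexample-index certificate rather than a guessed strategy is exactly the observation the paper makes.
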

 
In Table~\ref{tab:complexity}, the merged cells for eventually 
synchronizing reflect the fact that the winning regions coincide (see
Remark~\ref{rem:expressiveness}). The winning regions for the other 
synchronizing modes do not coincide, already in Markov chains (see \figurename~\ref{fig:mc}).

\paragraph{{\bf Acknowledgment.}}
The authors are grateful to Jean-Fran\c{c}ois Raskin for logistical support, 
and to Mahsa Shirmohammadi for interesting discussions about adversarial objectives.

\bibliographystyle{abbrv}
\bibliography{biblio} 

\begin{thebibliography}{10}

\bibitem{AGV18}
S.~Akshay, B.~Genest, and N.~Vyas.
\newblock Distribution-based objectives for {M}arkov decision processes.
\newblock In {\em Proc. of LICS: Logic in Computer Science}, pages 36--45.
  {ACM}, 2018.

\bibitem{BK08}
C.~Baier and J.-P. Katoen.
\newblock {\em Principles of Model Checking}.
\newblock MIT, 2008.

\bibitem{BRS02}
D.~Beauquier, A.~M. Rabinovich, and A.~Slissenko.
\newblock A logic of probability with decidable model-checking.
\newblock In {\em Proc. of CSL: Computer Science Logic}, LNCS 2471, pages
  306--321. Springer, 2002.

\bibitem{BS20}
P.~C. Bell and P.~Semukhin.
\newblock Decidability of cutpoint isolation for probabilistic finite automata
  on letter-bounded inputs.
\newblock In {\em Proc. of {CONCUR}: Concurrency Theory}, volume 171 of {\em
  LIPIcs}, pages 22:1--22:16. Schloss Dagstuhl - Leibniz-Zentrum f{\"{u}}r
  Informatik, 2020.

\bibitem{BDGG17}
N.~Bertrand, M.~Dewaskar, B.~Genest, and H.~Gimbert.
\newblock Controlling a population.
\newblock In {\em Proc. of {CONCUR}: Concurrency Theory}, volume~85 of {\em
  LIPIcs}, pages 12:1--12:16. Schloss Dagstuhl - Leibniz-Zentrum fuer
  Informatik, 2017.

\bibitem{BS07}
J.~C. Bradfield and C.~Stirling.
\newblock Modal mu-calculi.
\newblock In {\em Handbook of Modal Logic}, volume~3 of {\em Studies in logic
  and practical reasoning}, pages 721--756. North-Holland, 2007.

\bibitem{BW18}
J.~C. Bradfield and I.~Walukiewicz.
\newblock The mu-calculus and model checking.
\newblock In {\em Handbook of Model Checking}, pages 871--919. Springer, 2018.

\bibitem{CD16}
K.~Chatterjee and L.~Doyen.
\newblock Computation tree logic for synchronization properties.
\newblock In {\em Proc. of {ICALP}: Automata, Languages, and Programming},
  volume~55 of {\em LIPIcs}, pages 98:1--98:14. Schloss Dagstuhl -
  Leibniz-Zentrum fuer Informatik, 2016.

\bibitem{CH12}
K.~Chatterjee and T.~A. Henzinger.
\newblock A survey of stochastic $\omega$-regular games.
\newblock {\em J. Comput. Syst. Sci.}, 78(2):394--413, 2012.

\bibitem{CFO20}
T.~Colcombet, N.~Fijalkow, and P.~Ohlmann.
\newblock Controlling a random population.
\newblock In {\em Proc. of FoSSaCS: Foundations of Software Science and
  Computation Structures}, LNCS 12077, pages 119--135. Springer, 2020.

\bibitem{CY95}
C.~Courcoubetis and M.~Yannakakis.
\newblock The complexity of probabilistic verification.
\newblock {\em Journal of the ACM}, 42(4):857--907, 1995.

\bibitem{deAlfaro97}
L.~de~Alfaro.
\newblock {\em Formal Verification of Probabilistic Systems}.
\newblock PhD thesis, Stanford University, 1997.

\bibitem{ConcOmRegGames}
L.~de~Alfaro and T.~A. Henzinger.
\newblock Concurrent omega-regular games.
\newblock In {\em Proc. of LICS: Logic in Computer Science}, pages 141--154.
  IEEE, 2000.

\bibitem{DMS19}
L.~Doyen, T.~Massart, and M.~Shirmohammadi.
\newblock The complexity of synchronizing {M}arkov decision processes.
\newblock {\em Journal of Computer and System Sciences}, 100:96--129, 2019.

\bibitem{HBMDP}
E.~Feinberg and A.~Shwartz, editors.
\newblock {\em Handbook of Markov Decision Processes - Methods and
  Applications}.
\newblock Kluwer, 2002.

\bibitem{Gallager}
R.~G. Gallager.
\newblock {\em Stochastic processes: theory for applications}.
\newblock Cambridge University Press, 2013.

\bibitem{HKK14}
H.~Hermanns, J.~Krc{\'{a}}l, and J.~Kret{\'{\i}}nsk{\'{y}}.
\newblock Probabilistic bisimulation: Naturally on distributions.
\newblock In {\em Proc. of CONCUR: Concurrency Theory}, LNCS 8704, pages
  249--265. Springer, 2014.

\bibitem{AFA1}
P.~Jancar and Z.~Sawa.
\newblock A note on emptiness for alternating finite automata with a one-letter
  alphabet.
\newblock {\em Inf. Process. Lett.}, 104(5):164--167, 2007.

\bibitem{KVAK10}
V.~A. Korthikanti, M.~Viswanathan, G.~Agha, and Y.~Kwon.
\newblock Reasoning about {MDP}s as transformers of probability distributions.
\newblock In {\em Proc. of QEST: Quantitative Evaluation of Systems}, pages
  199--208. IEEE Computer Society, 2010.

\bibitem{Puterman}
M.~L. Puterman.
\newblock {\em {M}arkov Decision Processes}.
\newblock John Wiley and Sons, 1994.

\bibitem{Rabin63}
M.~O. Rabin.
\newblock Probabilistic automata.
\newblock {\em Information and Control}, 6:230--245, 1963.

\bibitem{Shi14}
M.~Shirmohammadi.
\newblock {\em Qualitative analysis of probabilistic synchronizing systems}.
\newblock PhD thesis, Universit\'e Libre de Bruxelles \& \'Ecole Normale
  Sup\'erieure de Cachan, 2014.

\bibitem{Vardi-focs85}
M.~Y. Vardi.
\newblock Automatic verification of probabilistic concurrent finite-state
  programs.
\newblock In {\em Proc. of FOCS: Foundations of Computer Science}, pages
  327--338. IEEE Computer Society, 1985.

\end{thebibliography}


\end{document}